\documentclass[12pt,oneside]{amsart}

\usepackage[english]{babel}
\usepackage{cite}
\usepackage{latexsym}
\usepackage{amssymb,amsthm,amsmath}
\usepackage{hyperref}
\usepackage{graphicx}
\usepackage{longtable}
\usepackage{xcolor}
\usepackage{enumerate}

\textwidth = 18.50cm \oddsidemargin = -1cm \evensidemargin = -1cm

\newtheorem{theorem}{Theorem}[section]
\newtheorem{proposition}[theorem]{Proposition}
\newtheorem{lemma}[theorem]{Lemma}
\newtheorem{corollary}[theorem]{Corollary}
\newtheorem{remark}[theorem]{Remark}
\newtheorem{problem}{Problem}

{\theoremstyle{definition}
\newtheorem*{definition*}{Definition}
\newtheorem{definition}[theorem]{Definition}}
\newtheorem*{proposition*}{Proposition}
\newtheorem*{corollary*}{Corollary}
\newtheorem*{lemma*}{Lemma}
\newtheorem*{remark*}{Remark}

\newcommand{\F}{\mathbb {F}}

\title{Minimal linear codes arising from blocking sets}
\author{Matteo Bonini}
\thanks{M. Bonini is with Department of Mathematics, University of Trento, Trento, Italy, \url{matteobonini11@gmail.com}}
\author{Martino Borello}
\thanks{M. Borello is with LAGA,  UMR 7539, CNRS, Universit\'e Paris 13 - Sorbonne Paris Cit\'e, Universit\'e Paris 8, F-93526, Saint-Denis, France, \url{borello@math.univ-paris13.fr}}
\date{}

\begin{document}
    \begin{abstract}
        Minimal linear codes are algebraic objects which gained interest in the last twenty years, due to their link with Massey's secret sharing schemes. In this context, Ashikhmin and Barg provided a useful and a quite easy to handle sufficient condition for a linear code to be minimal, which has been applied in the construction of many minimal linear codes.

        In this paper, we generalize some recent constructions of minimal linear codes which are not based on Ashikhmin--Barg's condition. More combinatorial and geometric methods are involved in our proofs. In particular, we present a family of codes arising from particular blocking sets, which are well-studied combinatorial objects. In this context, we will need to define cutting blocking sets and to prove some of their relations with other notions in blocking sets' theory. At the end of the paper, we provide one explicit family of cutting blocking sets and related minimal linear codes, showing that infinitely many of its members do not satisfy the Ashikhmin--Barg's condition.
    \end{abstract}

\maketitle

{\bf Keywords:} Minimal linear codes; secret sharing schemes; blocking sets.\\
\indent{\bf MSC 2010 Codes:} 94B05, 94C10, 94A62, 51E21 \\

\section{Introduction}
A codeword of a linear code is called \emph{minimal} if its support
(the set of nonzero coordinates) does not contain the support of any
other linearly independent codeword. For a linear code
$\mathcal{C}$, the supports of minimal codewords of the dual code
$\mathcal{C}^\perp$ give the access structure of a secret sharing
scheme, introduced by Massey in \cite{M93},\cite{M1995}. So, it is
particularly interesting to determine all minimal codewords of a
code. The problem of describing the set of minimal codewords of a
linear code is quite difficult in general, even in the binary case:
actually, the knowledge of the minimal codewords is related with the
complete decoding problem which is known to be NP-hard
\cite{BMeT1978}, even if preprocessing is allowed \cite{BN1990}. To
simplify this task, one can try to find linear codes for which all
codewords are minimal, called \emph{minimal linear codes}. The
problem of finding minimal linear codes has been first investigated
in \cite{DY2003}. Besides their use in secret sharing schemes,
minimal linear codes have shown to be suitable to other
applications: for example, in \cite{CCP2013}, minimal linear codes
are used to ensure privacy in a secure two-party computation. One of
the central results in this context is due to Ashikhmin and Barg
\cite{AB1998}: a linear code $\mathcal{C}$ over a finite field
$\mathbb{F}_q$ of order $q$ is minimal if
\begin{equation}\tag{AB}\label{AB}
    \frac{w_{max}}{w_{min}} < \frac{q}{q-1}
\end{equation}
where $w_{min}$ and $w_{max}$ respectively denote the minimum and
maximum nonzero weights in $\mathcal{C}$. This sufficient condition
(which we will call AB condition) provides an easy criterion to
construct minimal linear codes, especially in the case of codes with
few nonzero weights (see for example \cite{DD},\cite{MOS}). The AB
condition is not necessary. As remarked in \cite{HDZ2018}, a linear
code $\mathcal{C}$ over $\mathbb{F}_q$ is minimal if and only if for
any linearly independent codewords $c,c'\in \mathcal{C}$
\[\sum_{a\in\mathbb{F}_q^*}{\rm wt}(c'-ac)\ne (q-1)\cdot {\rm wt}(c')-{\rm wt}(c).\]
Minimal linear codes not satisfying the Ashkhmin-Barg  condition (AB
condition) are presented in  \cite{CH2017},\cite{DHZ2018},\cite{ZYW}
for the binary case, in \cite{HDZ2018} for the ternary case, and in
\cite{BB2019} for a general odd prime power. One of the main tools
used in these papers is the investigation of the Walsh spectrum of
generalized Boolean functions, which is used to characterize when
linear codes from a general construction are minimal. The aim of the
present paper is to describe in full generality the construction
given in \cite{HDZ2018},\cite{BB2019}, to have more precise
conditions and other families of examples. More combinatorial and
geometric methods are involved in our proofs. In particular, we
present a family of codes arising from particular blocking sets,
which are well-studied combinatorial objects. We believe that the up
to now unexplored link between minimal linear codes and blocking
sets may give rise to many families of new minimal linear codes and
maybe to new perspectives in blocking sets' theory. In the paper we
present, as an example, one particular family of blocking sets and
related minimal linear codes and we prove that  infinitely many
members of this family do
not satisfy the AB condition.\\
The paper is structured as follows: in Section \ref{sec:background}
we recall the main notions necessary to understand the paper; in
Section \ref{sec:bs} we recall some results about blocking sets and
we introduce new definitions and prove new results in this context;
Section \ref{sec:codes} is devoted to the general construction of
minimal linear codes related to blocking sets, both in affine and in
projective case; in Section \ref{sec:functions} we present an
explicit family of minimal linear codes and we show that infinitely
many of its members do not satisfy AB condition; finally in Section
\ref{sec:conclusion} we resume our results and we present some open
problems.
\bigskip
\section{Background}\label{sec:background}
In the whole paper, $\mathbb{F}_q$ will be a finite field with $q$
elements and $e_1,\ldots,e_n$ will denote the canonical basis of
$\F_q^n$.
 \subsection{Linear codes}
    We recall here some basic definition in coding theory (for a complete exposition of the concepts the reader is referred to \cite{HP}).
    \begin{definition}
     Let  $k,n$ be two positive integers such that $k \leq n$. Let $\mathcal{C}$ be a $k$-dimensional vector subspace of $\mathbb{F}_q^n$: we say that $\mathcal{C}$ is a \emph{$q$-ary linear code} of dimension $k$ and length $n$ or an $[n,k]$ code over $\F_q$. The elements of $\mathcal{C}$ are usually called \emph{codewords}. A \emph{generator matrix} of $\mathcal{C}$ is a matrix whose rows form a basis of $\mathcal{C}$ as a vector space over $\F_q$.
    \end{definition}
    Classically, a linear code is endowed with the Hamming metric.
    \begin{definition}
    For $x \in \mathbb{F}_q^n$, the \emph{Hamming weight} of $x$ is the number of nonzero coordinates of $x$ (i.e. the cardinality of its \emph{support} ${\rm supp}(x)$). We denote it by ${\rm wt}(x)$.
    \end{definition}
    For $v=(v_1,\ldots,v_n),w=(w_1,\ldots,w_n)\in \F_q^n$, we denote
    $$v\cdot w=\sum_{i=1}^nv_iw_i$$
    the Euclidean inner product between $v$ and $w$. For a code $\mathcal{C}$, the \emph{dual code} is defined as $\mathcal{C}^\perp:=\{v\in \F_q^n \mid v\cdot c=0, \ \forall c\in \mathcal{C}\}$.

    The following is a central definition for this paper.

    \begin{definition} A codeword $c\in \mathcal{C}$ is \emph{minimal} if it covers only linearly dependent  codewords, i.e. if for all  $c^{\prime}\in \mathcal{C}$,
    \[{\rm supp}(c) \subset {\rm supp}(c^{\prime}) \Longrightarrow \exists \lambda \in \mathbb{F}_q^* \ : \ c^{\prime}=\lambda c.\]
    A linear code $\mathcal{C}$ is \emph{minimal} if every nonzero codeword $c \in \mathcal{C}$ is minimal.
    \end{definition}

    The symmetric group of degree $n$ acts naturally on $\F_q^n$ permuting the coordinates of vectors. This action induce an action on linear codes, which preserves length, dimension and weights. Codes in the same orbits are called \emph{equivalent}. There are more general concepts of equivalence, but they are not necessary for this paper.

\subsection{Affine and projective hypersurfaces}

We recall here some basic definitions of affine and projective
geometry over finite fields. For a more detailed introduction, we
refer to \cite{HirschBook}.

Let $\mathbb{A}(\F_{q}^{n})$ be the affine space of dimension $n$
over the field $\mathbb{F}_q$ and $\mathbb{P}(\F_q^n)$ the
projective space of dimension $n-1$ over $\F_q$.

    Let $f:\F_{q}^n\to\F_{q}$ be a function. It is well-known that all the functions from $\F_{q}^n$ to $\F_{q}$ are polynomial functions (see for example \cite{Clark}). Let us call $$V(f)=\{x\in\F_{q}^n\,|\,f(x)=0\}\subset \mathbb{A}(\F_{q}^{n})$$
    the \emph{affine hypersurface} defined by $f$. We denote  $V(f)^*=V(f)\setminus\{0\}$. If $f$ is homogeneous, $V(f)$ can be seen also as an hypersurface in $\mathbb{P}(\F_q^n)$ by identifying linearly dependent vectors. To avoid any risk of confusion, we will denote $V_p(f)$ the hypersurface in $\mathbb{P}(\F_q^n)$.
    When $f$ is linear, then $V(f)$ is called \emph{affine hyperplane}.
    In this paper, we will mainly consider hyperplanes through the origin, that is hyperplanes for which $f$ is linear and homogeneous. These can be described in terms of a orthogonal vector, as follows (we use the same notation as in \cite{BB2019}).

    \begin{definition}
    Let $v\in\F_q^n$, we define the hyperplane $H(v)$ as the set $$H(v):=\left\{x\in\F_q^n \mid  v\cdot x=0\right\}=\langle v\rangle^\perp,$$
    i.e. $H(v)$ is the set of all the vectors of $\F_q^n$ that are orthogonal to the vector $v$. As done for $V(f)$, we call $H(v)^*:=H(v)\setminus\{0\}$. Note that $H(v)$ can be seen as a hyperplane in $\mathbb{P}(\F_q^n)$ by identifying linearly dependent vectors. Moreover, in the projective case, these are all hyperplanes.
    \end{definition}

\bigskip
\section{Cutting blocking sets}\label{sec:bs}

    We recall some definitions about blocking sets (for a more detailed introduction see \cite[Chapter 3]{SdB}). We will need to introduce some new definitions (Definition \ref{def:vectorial} and Definition \ref{def:cutting}) and prove a new result which adapt well to our context.

    \begin{definition}
    An \emph{affine} (respectively \emph{projective}) $k$-\emph{blocking set} is a subset of an $n$-dimensional affine (respectively projective) space intersecting all $(n-k)$-dimensional affine (respectively projective) subspaces. An affine (respectively projective) $1$-blocking set is also called affine (respectively projective) blocking set.
    \end{definition}

    As we mentioned, it is well-known that every function from $\F_q^n$ to $\F_q$ is polynomial. Every set in the affine space $\mathbb{A}(\F_q^n)$ can be then seen as a hypersurface $V(f)$ for some polynomial function. In particular, every $k$-blocking set can be seen as an hypersurface. With the above definitions, which are the classical ones, there is a substantial asymmetry in the case of $f$ being homogeneous, because if $V(f)^*$ is a $k$-blocking set in the affine space $\mathbb{A}(\F_q^n)$, then $V_p(f)$ is a $k$-blocking set in the projective space $\mathbb{P}(\F_q^n)$, but the vice versa is not true. We think that a natural definition to avoid this asymmetry is the following, which is a weakened version of the affine definition.

    \begin{definition}\label{def:vectorial}
    A \emph{vectorial} $k$-\emph{blocking set} is a subset of an $n$-dimensional affine space not containing the origin intersecting all $(n-k)$-dimensional affine subspaces through the origin. A vectorial $1$-blocking set is also called vectorial blocking set.
    \end{definition}

    A $k$-blocking set $\mathcal{B}$ is \emph{$d$-dimensional} if the subspace generated by $\mathcal{B}$ has dimension $d$. A $k$-dimensional affine (respectively projective, respectively vectorial) subspace is a $k$-blocking set, a $k$-blocking set containing one is called \emph{trivial}.

    \begin{definition}\label{def:bs}
    An affine (respectively projective, respectively vectorial) \emph{$(k, s)$-blocking set}, where $n>k$, is an affine (respectively projective, respectively vectorial) $k$-blocking set that does not contain an affine (respectively projective, respectively affine through the origin) subspace of dimension $s$.
    \end{definition}

    As far as we know, the following property is not equivalent (or at least implied) by any known notion in blocking sets' theory, but it will play a crucial role in our results.

    \begin{definition}\label{def:cutting}
    An affine (respectively projective, respectively vectorial) $k$-blocking set is \emph{cutting} if its intersection with every $(n-k)$-dimensional affine (respectively projective, respectively affine through the origin) subspace is not contained in any other
    $(n-k)$-dimensional affine (respectively projective, respectively affine through the origin) subspace.
    \end{definition}

    The property of being cutting is quite strong, as one can see from the following result.

    \begin{theorem}\label{lemma:ndim}
A set $\mathcal{B}$ is a $t$-dimensional (with $t>n-k$) affine
(respectively projective, respectively vectorial) cutting
$k$-blocking set if and only if the intersection between
$\mathcal{B}$ and every $(n-k)$-dimensional affine (respectively
projective, respectively affine through the origin) subspace is not
contained in any other
    $(n-k)$-dimensional affine (respectively projective, respectively affine through the origin) subspace.
\end{theorem}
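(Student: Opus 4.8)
The forward implication is immediate: by Definition \ref{def:cutting}, a cutting $k$-blocking set is one whose intersection with every $(n-k)$-dimensional subspace is not contained in any other such subspace, which is exactly the stated intersection property. So I would dispose of this direction in one line, carrying the dimensional hypothesis $t>n-k$ along for the ride. All the content lies in the converse: starting only from the intersection property, I must recover both that $\mathcal{B}$ is a $k$-blocking set (it meets \emph{every} $(n-k)$-dimensional subspace) and that the subspace it generates has dimension strictly larger than $n-k$.

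The engine of the whole proof is a single counting observation, which I would isolate and prove uniformly in the three settings (affine, projective, vectorial) by reducing to linear algebra over $\F_q$: any subspace $S$ of dimension strictly less than $n-k$ is contained in at least two distinct $(n-k)$-dimensional subspaces. Concretely, the number of $(n-k)$-dimensional subspaces through $S$ is a Gaussian binomial coefficient $\binom{\,\cdot\,}{\,\cdot\,}_q$ which exceeds $1$ precisely when $\dim S<n-k$ and $k\ge 1$, the empty intersection corresponding to the extreme case $S=\{0\}$. Using this, I would reformulate the intersection property in the crisp equivalent form: for every $(n-k)$-dimensional subspace $U$, the intersection $\mathcal{B}\cap U$ \emph{spans} $U$, i.e. $\langle \mathcal{B}\cap U\rangle=U$. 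Indeed, if the span equals $U$ then any $(n-k)$-dimensional subspace containing $\mathcal{B}\cap U$ must contain $U$ and hence equal it; conversely, if the span is a proper subspace of $U$, the counting observation produces a second $(n-k)$-dimensional subspace containing $\mathcal{B}\cap U$, violating the property.

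With this reformulation the blocking property is automatic: were $\mathcal{B}\cap U=\emptyset$ for some $U$, its span would be trivial, of dimension $0<n-k$, so it would sit in another $(n-k)$-dimensional subspace, a contradiction; hence $\mathcal{B}\cap U\neq\emptyset$ for all $U$, so $\mathcal{B}$ is a $k$-blocking set, and together with the property it is cutting. For the dimensional bound I would argue by contradiction: suppose $t=\dim\langle\mathcal{B}\rangle\le n-k$ and fix an $(n-k)$-dimensional subspace $U_0\supseteq\langle\mathcal{B}\rangle$. If $t<n-k$, then $\mathcal{B}\cap U_0=\mathcal{B}$ and $\langle\mathcal{B}\cap U_0\rangle=\langle\mathcal{B}\rangle\subsetneq U_0$, contradicting the reformulated property at $U_0$. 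If $t=n-k$, then $\langle\mathcal{B}\rangle=U_0$ and, choosing any other $(n-k)$-dimensional subspace $U'\neq U_0$ (which exists since $k\ge 1$), we get $\mathcal{B}\cap U'\subseteq U_0\cap U'\subsetneq U'$, again contradicting the property. Thus $t>n-k$, completing the converse.

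I expect the only delicate points to be the uniform treatment of the counting observation across the affine, projective and vectorial cases (where ``subspace'' and ``span'' must be read as affine, projective, or linear-through-the-origin respectively, and the origin is excluded in the vectorial case), and the boundary case $t=n-k$ in the dimension bound, where $\mathcal{B}$ already spans an $(n-k)$-dimensional subspace and one must test the property against a \emph{different} subspace rather than against $U_0$ itself.
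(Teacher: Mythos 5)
Your proof is correct, but it takes a noticeably heavier route than the paper's. The paper's converse is two lines of direct set containment and needs no counting at all: an empty intersection $\mathcal{B}\cap\mathcal{S}=\emptyset$ is trivially contained in any \emph{other} $(n-k)$-dimensional subspace, so the hypothesis forces $\mathcal{B}$ to be a $k$-blocking set; and if $\mathcal{B}\subset\mathcal{S}'$ for some $(n-k)$-dimensional subspace $\mathcal{S}'$ (which is exactly the situation $t\le n-k$, after extending the span of $\mathcal{B}$ to dimension $n-k$ if necessary), then $\mathcal{B}\cap\mathcal{S}\subset\mathcal{B}\subset\mathcal{S}'$ for every $\mathcal{S}\ne\mathcal{S}'$, a contradiction --- so there is no need for your case split between $t<n-k$ and $t=n-k$, which the containment trick handles uniformly. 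You instead interpose a quantitative lemma (every subspace of dimension strictly less than $n-k$ lies in at least two distinct $(n-k)$-dimensional subspaces, via Gaussian binomials) in order to establish the equivalent spanning formulation $\langle\mathcal{B}\cap U\rangle=U$ for every $(n-k)$-dimensional subspace $U$, and you then derive blocking, cutting and the dimension bound from that. What your detour buys is precisely this reformulation, which is strictly more than the theorem asks and is genuinely useful: it is the natural way to think about the cutting property as it is exploited later in Section \ref{sec:codes} (compare the hyperplane condition in Lemma \ref{lemma:blocking}). What it costs is the counting lemma, a uniformity check across the three geometries, and the boundary case $t=n-k$. One cosmetic imprecision worth fixing: in the affine and projective settings the hull of the empty set is empty (conventionally of dimension $-1$), not a point of dimension $0$ as you write; this does not damage the argument, since the empty set is contained in every subspace anyway, which is all you actually use.
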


\begin{proof}

\begin{itemize}
    \item[$(\Rightarrow)$] by definition.
    \item[$(\Leftarrow)$]  We will prove it only in the affine case to simplify the notations (but in the other cases the proof is exactly the same). If the intersection between $\mathcal{B}$ and every $(n-k)$-dimensional affine subspace is not contained in any other $(n-k)$-dimensional affine subspace, then in particular the intersection is non-empty, so that
$\mathcal{B}$ is an affine $k$-blocking set.

If it exists an $(n-k)$-dimensional affine subspace $\mathcal{S}'$
such that $\mathcal{B}\subset \mathcal{S}'$, then $\mathcal{B}\cap
\mathcal{S}\subset \mathcal{B}\subset \mathcal{S}',$ for any
$(n-k)$-dimensional affine subspace $\mathcal{S}$, a contradiction.
So $\mathcal{B}$ is an $t$-dimensional blocking set, with $t> n-k$.

The cutting property is by definition again.
\end{itemize}
\end{proof}

In Section \ref{sec:codes} we will relate the cutting property to
minimal linear codes and in Section \ref{sec:functions} we will
provide an example of both vectorial and projective cutting blocking
sets.

    \bigskip
    \section{A family of codes arising from cutting blocking sets}\label{sec:codes}

    In this section we take up the notion of the code investigated in \cite{BB2019},\cite{HDZ2018} and prove some new connections with blocking sets.

    For every function $f:\mathbb{F}_q^n\to \mathbb{F}_q$, let $\mathcal{C}_f$ be the linear code defined as
    \begin{equation}\label{Def:Code}
    \mathcal{C}_f :=\{(uf(x)+v\cdot x)_{x \in \mathbb{F}_q^n \setminus \{0\}} \ | \ u \in \mathbb{F}_q, v \in \mathbb{F}_q^n\},
    \end{equation}
    where $v\cdot x$ is the Euclidean inner product between $v$ and $x$, as in \S\ref{sec:background}. Here we are supposing, as usual in coding theory, to have a fixed ordering of $\mathbb{F}_q^n \setminus \{0\}$. What follows will not depend on this ordering: if we change the ordering, we just obtain an equivalent code.

    For any pair $(u,v) \in \mathbb{F}_q\times \mathbb{F}_{q}^n$, let us denote $c(u,v):=(uf(x)+v\cdot x)_{x \in \mathbb{F}_q^n \setminus \{0\}}$. For an $x\in\mathbb{F}_q^n \setminus \{0\}$, we denote by $c(u,v)_x$  the entry in $c(u,v)$ corresponding to $x$. The support ${\rm supp}(c(u,v))$ of a codeword $c(u,v)$ is defined as the set of $\{x \in \mathbb{F}_q^n \setminus \{0\} \ : \ c(u,v)_x\neq 0\}$. The complement of the support $\overline{{\rm supp}}(c(u,v))$ is then the set of $\{x \in \mathbb{F}_q^n \setminus \{0\} \ : \ c(u,v)_x= 0\}$.

\subsection{Parameters of \texorpdfstring{$\mathcal{C}_f$}{Lg}}

\begin{proposition}\label{Prop:parameters}

If $f$ is not linear and $V(f)^*\ne \F_q^n\setminus\{0\}$, the code
$\mathcal{C}_f$ defined as in \eqref{Def:Code}  has length $q^n-1$
and dimension $n+1$ over $\mathbb{F}_{q}$.
    \end{proposition}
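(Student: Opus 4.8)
The plan is to exhibit $\mathcal{C}_f$ as the image of an explicit linear map and to reduce the statement to the injectivity of that map. The length is immediate: by definition the codewords $c(u,v)$ are vectors indexed by $x\in\F_q^n\setminus\{0\}$, and there are exactly $q^n-1$ such indices, so every codeword has length $q^n-1$. For the dimension I would introduce the evaluation map
\[
\Phi\colon \F_q\times\F_q^n\longrightarrow \F_q^{q^n-1},\qquad \Phi(u,v)=c(u,v)=(uf(x)+v\cdot x)_{x\in\F_q^n\setminus\{0\}}.
\]
Since $uf(x)+v\cdot x$ is $\F_q$-linear in the pair $(u,v)$, the map $\Phi$ is linear, and by construction its image is exactly $\mathcal{C}_f$. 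As the domain has dimension $n+1$, we get $\dim\mathcal{C}_f=(n+1)-\dim\ker\Phi$, so the whole content is to show $\ker\Phi=\{0\}$.

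To analyse the kernel, I would suppose $(u,v)$ satisfies $uf(x)+v\cdot x=0$ for every $x\neq 0$ and split according to whether $u$ vanishes. If $u=0$, then $v\cdot x=0$ for all $x\neq 0$ (and trivially for $x=0$), and non-degeneracy of the Euclidean inner product forces $v=0$. If instead $u\neq 0$, then $f(x)=-u^{-1}(v\cdot x)$ for all $x\neq 0$, i.e. the restriction of $f$ to $\F_q^n\setminus\{0\}$ coincides with the linear form $x\mapsto -u^{-1}(v\cdot x)$; this is precisely what the two hypotheses are designed to forbid. Thus both branches collapse to $(u,v)=(0,0)$, $\Phi$ is injective, and $\dim\mathcal{C}_f=n+1$.

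I would read the two hypotheses as the exact translation of this dichotomy inside the branch $u\neq 0$: the clause that $f$ is not linear rules out the subcase $v\neq 0$, where $f$ would agree with a genuine nonzero linear form, while $V(f)^*\neq\F_q^n\setminus\{0\}$ rules out the subcase $v=0$, where $f$ would vanish identically on $\F_q^n\setminus\{0\}$ and $c(1,0)$ would be the zero word. Equivalently, and perhaps more transparently, one may observe that the $n$ words $c(0,e_i)=(x_i)_{x}$ are linearly independent—a combination $\sum_i\lambda_i c(0,e_i)$ equals $\bigl((\sum_i\lambda_i e_i)\cdot x\bigr)_x$, which is zero only for $\lambda=0$—so they span an $n$-dimensional subspace $\mathcal{L}$ consisting of all words of the form $(\ell\cdot x)_x$; then $\dim\mathcal{C}_f=n+1$ is equivalent to $c(1,0)=(f(x))_x\notin\mathcal{L}$, which is again the assertion that $f$ does not agree with a linear form on the nonzero vectors.

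The step I expect to be delicate is exactly the $u\neq 0$ branch, because the codeword $c(u,v)$ only sees the values of $f$ on $\F_q^n\setminus\{0\}$ and is blind to $f(0)$. Consequently the hypothesis must be understood as stating that $f$ does not coincide with a linear form on all nonzero vectors: a function differing from a linear form only at the origin would still produce a code of dimension $n$, so this is where one must be careful about the precise meaning of the hypothesis (and where the low-dimension, small-field corner cases, in which every function on the nonzero vectors is forced to be linear, would have to be kept in mind). Once this reading is fixed, the two branches above close the argument with no further computation.
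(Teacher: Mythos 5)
Your proof is correct and follows essentially the same route as the paper's: the paper shows that $c(1,0),c(0,e_1),\ldots,c(0,e_n)$ are linearly independent by proving $c(u,v)=0\Rightarrow(u,v)=(0,0)$, with exactly your case split --- $u=0$ handled by evaluating at the $e_i$ (your non-degeneracy argument), and $u\neq 0$ handled by noting that $v\neq 0$ would make $f$ agree with a nonzero linear form (the paper's ``$V(f)=H(v)$, a contradiction'') while $v=0$ would force $V(f)^*=\F_q^n\setminus\{0\}$. Your closing caveat is a genuine and worthwhile observation: since $c(u,v)$ only sees $f$ on $\F_q^n\setminus\{0\}$, the hypothesis must be read as ``$f$ agrees with no linear form on the nonzero vectors'' (a function equal to a linear form except at the origin is not linear yet yields $\dim\mathcal{C}_f=n$), a subtlety the paper's own terse deduction glosses over.
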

    \begin{proof}
        Clearly, the length of $\mathcal{C}_f$ is $\#(\mathbb{F}_q^n\setminus \{0\})=q^n-1$.

        Each codeword in $\mathcal{C}_f$ can be written as linear combination of $c(1,0)$, $c(0,e_1)$, \ldots, $c(0,e_n)$, where $e_1,\ldots,e_n$ is the standard basis of $\mathbb{F}_q^n$ over $\mathbb{F}_q$. Showing that these vectors are linearly independent is equivalent to prove that $c(u,v)=0$ if and only if $u=0$ and $v=0$.
        \begin{itemize}
            \item If $u=0$, then $c(u,v)_{e_i}=v_i=0$ for $i\in \{1,\ldots,n\}$, so that $v=0$.
            \item If $u\neq 0$, $uf(x)+v\cdot x=0$ for all $x\in \F_q^n\setminus \{0\}$ implies that $V(f)=H(v)$, if $v\ne 0$, which is a contradiction.
        \end{itemize}
        This proves that $c(1,0)$, $c(0,e_1)$, \ldots, $c(0,e_n)$ is a basis of $\mathcal{C}$ of size $n+1$.
    \end{proof}

    If $f$ is not linear and $V(f)^*\ne \F_q^n\setminus\{0\}$, we have then that a generator matrix of $\mathcal{C}_f$ is

    \[
    G=\left[\begin{matrix}
    f(x_0)&f(x_1)&\ldots&f(x_{q^n-1})\\
    (x_0)^T&(x_1)^T&\ldots&(x_{q^n-1})^T
    \end{matrix}\right],
    \]
    where $(x_i)^T$, for $i\in\{0,\dots,q^n-1\}$, are the column vectors of $\F_{q}^n\setminus\{0\}$ with respect to the fixed ordering.

    \begin{remark}\label{rmk:weights}
        The matrix $G$ is formed by two blocks, one of them is the line obtained from the evaluations of the function $f$, the other one is the matrix formed by all the ordered vectors of $\F_{q}^n\setminus\{0\}$. The second block is the generator matrix of the simplex code for $q=2$ and, up to a reordering of the coordinates, the concatenation of $q-1$ copies of the simplex code for $q>2$ $($see {\rm \cite[\S 1.8]{HP}}$)$. Therefore, ${\rm wt}(c(0,v))=q^{n}-q^{n-1}$ $($see {\rm \cite[Theorem 2.7.5]{HP}}$)$ and  ${\rm wt}(c(u,0))=q^n-1-\#V(f)^*$.
    \end{remark}

        It seems to be quite difficult to prove general results on the weight distribution of $\mathcal{C}_f$. On the other hand, we can easily observe a condition on $\# V(f)^*$ for $\mathcal{C}_f$ to not satisfy the AB condition.

        \begin{lemma}\label{lemma:AB}
        If $\# V(f)^*\geq 2q^{n-1}-q^{n-2}-1$, then
        \[\frac{w_{max}}{w_{min}} \geq \frac{q}{q-1}.\]     \end{lemma}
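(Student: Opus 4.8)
The plan is to extract two explicit codewords whose weights already force the ratio $w_{max}/w_{min}$ to be large, namely the two families recorded in Remark~\ref{rmk:weights}. Since $c(0,v)$ (for $v\neq 0$) and $c(u,0)$ (for $u\neq 0$) are nonzero codewords of $\mathcal{C}_f$, their weights bound the extremal weights: one has
$$w_{max}\geq {\rm wt}(c(0,v))=q^{n}-q^{n-1},\qquad w_{min}\leq {\rm wt}(c(u,0))=q^{n}-1-\#V(f)^*.$$
Under the standing assumptions on $f$ (not linear and $V(f)^*\neq \mathbb{F}_q^n\setminus\{0\}$) the quantity $q^{n}-1-\#V(f)^*$ is a strictly positive integer, so it is a genuine nonzero weight and the second bound is meaningful.

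First I would combine these two bounds into
$$\frac{w_{max}}{w_{min}}\ \geq\ \frac{q^{n}-q^{n-1}}{q^{n}-1-\#V(f)^*},$$
using that replacing $w_{min}$ by the larger upper bound only increases the fraction. It then suffices to show that the hypothesis $\#V(f)^*\geq 2q^{n-1}-q^{n-2}-1$ is exactly what makes this lower bound at least $q/(q-1)$.

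Then I would clear denominators. Because both denominators are positive, the inequality
$$\frac{q^{n}-q^{n-1}}{q^{n}-1-\#V(f)^*}\ \geq\ \frac{q}{q-1}$$
is equivalent to $(q^{n}-q^{n-1})(q-1)\geq q\bigl(q^{n}-1-\#V(f)^*\bigr)$. Expanding the left side as $q^{n-1}(q-1)^2=q^{n+1}-2q^{n}+q^{n-1}$ and the right side as $q^{n+1}-q-q\,\#V(f)^*$, the terms $q^{n+1}$ cancel and, after dividing by $q$, the inequality collapses to precisely $\#V(f)^*\geq 2q^{n-1}-q^{n-2}-1$. This is the assumed bound, which closes the argument.

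The computation itself is routine; the only point requiring a little care is the positivity of $q^{n}-1-\#V(f)^*$, which guarantees both that $w_{min}$ is compared against a legitimate nonzero weight and that clearing the denominator preserves the direction of the inequality. I expect no further obstacle, since the chain of equivalences turns the statement into an identity between the two expressions in $\#V(f)^*$.
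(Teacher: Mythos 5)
Your proof is correct and takes essentially the same route as the paper: both arguments compare the two explicit weights ${\rm wt}(c(0,v))=q^{n}-q^{n-1}$ and ${\rm wt}(c(u,0))=q^{n}-1-\#V(f)^*$ from Remark~\ref{rmk:weights} and use that $w_{max}/w_{min}$ dominates the ratio of any two nonzero weights. The paper leaves the algebra as a ``trivial fact,'' while you make explicit both the positivity of $q^{n}-1-\#V(f)^*$ (guaranteed by $V(f)^*\neq \mathbb{F}_q^n\setminus\{0\}$) and the fact that the cleared-denominator inequality is exactly equivalent to the hypothesis $\#V(f)^*\geq 2q^{n-1}-q^{n-2}-1$.
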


        \begin{proof}
        It is enough to use the trivial fact that if    $$w_{\min}\leq w_1\leq w_2\leq w_{\max}$$
        then $$\frac{w_{max}}{w_{min}}\geq \frac{w_2}{w_1},$$
        with the known weights given in Remark \ref{rmk:weights}.
        \end{proof}

    \subsection{Minimality of \texorpdfstring{$\mathcal{C}_f$}{Lg}}

Let $f:\F_q^n\to \F_q$ a function. For reader's convenience, we
recall that, by Theorem \ref{lemma:ndim}, $V(f)^*$ is an
$n$-dimension cutting vectorial blocking set if and only if
 the intersection between $V(f)^*$ and any hyperplane through the origin $H(v)$ is not contained in any other hyperplane through the origin $H(v')$.

Moreover, the blocking set is a $(1,n-1)$-blocking set if and only
if $V(f)^*$ does not contain any hyperplane without the origin
$H(v)^*$.

\begin{remark}\label{rmk:nonlinear}
If $V(f)^*$ does not contain any hyperplane without the origin
$H(v)^*$, then clearly  $f$ is not linear and $V(f)^*\ne
\F_q^n\setminus\{0\}$.
\end{remark}

\begin{lemma}\label{lemma:blocking}
 The intersection between $V(f)^*$ and any hyperplane through the origin $H(v)$ is not contained in any other hyperplane through the origin $H(v')$ if and only if for any $v,v'\in \F_q^n\setminus\{0\}$, with $H(v)\ne H(v')$, it exists $x\in \F_q^n\setminus \{0\}$ such that     \begin{equation}\label{eq:blocking}
     uf(x)+v\cdot x=0 \, \, \text{ and } \, \, u'f(x)+v'\cdot x\neq 0,
 \end{equation}
    for every $u,u'\in \F_q$.
\end{lemma}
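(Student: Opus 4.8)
The plan is to observe that both sides of the claimed equivalence are merely two encodings of one and the same geometric statement: for every pair $v,v'\in\F_q^n\setminus\{0\}$ with $H(v)\ne H(v')$, there exists a nonzero vector $x$ lying in $V(f)^*\cap H(v)$ but not in $H(v')$. Once this is recognized, the proof amounts to unpacking the universal quantifiers on $u$ and $u'$ in \eqref{eq:blocking} and reading off the geometry; both implications then follow at once.

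First I would fix $v,v'$ with $H(v)\ne H(v')$ and analyze, for a single fixed $x\ne 0$, the requirement that $uf(x)+v\cdot x=0$ hold for \emph{every} $u\in\F_q$. Specializing to $u=0$ forces $v\cdot x=0$, and then $u=1$ forces $f(x)=0$; conversely, if $f(x)=0$ and $v\cdot x=0$ then $uf(x)+v\cdot x$ vanishes identically in $u$. Hence the first half of \eqref{eq:blocking} (quantified over all $u$) is exactly the condition $x\in V(f)^*\cap H(v)$.

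Next I would treat the second requirement, that $u'f(x)+v'\cdot x\ne 0$ for every $u'\in\F_q$. Using $f(x)=0$ obtained in the previous step, the term $u'f(x)$ disappears and the inequality collapses to $v'\cdot x\ne 0$, i.e.\ $x\notin H(v')$. It is worth remarking that this is precisely where the quantifier over $u'$ is needed: if one had $f(x)\ne 0$, then $u'f(x)+v'\cdot x$ would run through all of $\F_q$ as $u'$ varies, and in particular take the value $0$, so the second requirement would be impossible; thus the quantifier over $u$ in the first condition is exactly what guarantees compatibility of the two conditions.

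Combining these observations, the existence of a nonzero $x$ satisfying \eqref{eq:blocking} for all $u,u'$ is equivalent to the existence of $x\in V(f)^*\cap H(v)$ with $x\notin H(v')$, that is, to $V(f)^*\cap H(v)\not\subseteq H(v')$. Quantifying over all admissible pairs $(v,v')$ with $H(v)\ne H(v')$ turns the right-hand side of the lemma into the assertion that $V(f)^*\cap H(v)$ is never contained in another hyperplane $H(v')$ through the origin, which is verbatim the left-hand side. Since each step is a genuine reformulation, the two directions are established simultaneously. The only delicate point, and the place to be careful, is the correct handling of the nested quantifiers $\exists x\,\forall u,u'$: one must use the quantifier on $u$ to pin down $f(x)=0$ before the quantifier on $u'$ becomes vacuous and the second inequality reduces to a condition on $v'\cdot x$ alone.
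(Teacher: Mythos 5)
Your proof is correct and follows essentially the same route as the paper's: both arguments unpack the quantifiers over $u$ and $u'$ to show that condition \eqref{eq:blocking} for a nonzero $x$ is precisely the statement $x\in V(f)^*\cap H(v)$ and $x\notin H(v')$, and then quantify over the pairs $(v,v')$. Your packaging as a single pointwise equivalence (specializing $u=0$ and $u=1$) is marginally tidier than the paper's two separate implications, but the substance is identical.
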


 \begin{proof}
   If the intersection between $V(f)^*$ and any hyperplane is not contained in any other hyperplane, then  for any $v,v'\in \F_q^n\setminus\{0\}$, with $H(v)\ne H(v')$, it exists $x\in V(f)^*\cap H(v)$ but not in $H(v')$, which satisfies \eqref{eq:blocking}.
   Conversely, if $x\not \in H(v)$, then $uf(x)+v\cdot x$ cannot be equal to $0$ for all $u\in \F_q$. So $x\in H(v)$, which implies $f(x)=0$, so that $x\in V(f)^*\cap H(v)$, and, finally, $x\not \in H(v')$.
 \end{proof}

\begin{theorem}\label{Th:Main}:
Let $f:\F_q^n\to \F_q$ be a function. If

\begin{enumerate}
    \item[{\rm(a)}] $V(f)^*$ is  an $n$-dimensional cutting vectorial $(1,n-1)$-blocking set  in
$\mathbb{A}(\F_q^n)$;
\item[{\rm (b)}] for every non-zero vector $v$, it exists  $x$ such that $f(x) + v\cdot x= 0$ and $f(x)$ is different from 0
\end{enumerate}
then $\mathcal{C}_f$ defined as in \eqref{Def:Code} is a
$[q^n-1,n+1]$ minimal code over $\F_q$.
\end{theorem}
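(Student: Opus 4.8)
The plan is to keep the parameters essentially for free and to concentrate all the work on minimality, translated into a geometric statement about how the translated hypersurfaces $V(f+v'\cdot x)^*$ meet hyperplanes through the origin. First I would note that, since (a) includes the $(1,n-1)$-property, Remark~\ref{rmk:nonlinear} guarantees that $f$ is not linear and $V(f)^*\ne\F_q^n\setminus\{0\}$, so Proposition~\ref{Prop:parameters} already yields the parameters $[q^n-1,n+1]$. For minimality, recall that $\mathcal{C}_f$ is minimal exactly when for every pair of linearly independent codewords $c(u,v),c(u',v')$ one has ${\rm supp}(c(u,v))\not\subseteq{\rm supp}(c(u',v'))$. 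Passing to the complements of the supports $\overline{{\rm supp}}(c(u,v))=\{x\ne 0\mid uf(x)+v\cdot x=0\}$, this is equivalent to: for all linearly independent $(u,v),(u',v')$ there exists $x\in\F_q^n\setminus\{0\}$ with
\begin{equation}\label{eq:witness}
u'f(x)+v'\cdot x=0\quad\text{and}\quad uf(x)+v\cdot x\ne 0.
\end{equation}
Producing such a witness $x$ in every case is the whole content of the proof, and since both conditions in \eqref{eq:witness} are homogeneous in $(u',v')$ and in $(u,v)$ separately, I may rescale each pair; the case distinction is then governed only by whether $u'$ and $u$ vanish.

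I would first dispose of the pairs with $u'=0$ (so $v'\ne 0$), where the first equation reads $x\in H(v')^*$. If also $u=0$, then $v,v'$ are linearly independent, so $H(v)\ne H(v')$, whence $H(v')\not\subseteq H(v)$ and any $x\in H(v')\setminus H(v)$ is a witness. If $u\ne 0$, rescale to $u=1$ and suppose no witness exists, i.e. $f\equiv -v\cdot x$ on $H(v')^*$; then $V(f)^*\cap H(v')=H(v)^*\cap H(v')^*$. When $v=0$, or when $v\ne 0$ with $H(v)=H(v')$, this forces $H(v')^*\subseteq V(f)^*$, contradicting the $(1,n-1)$-property in (a); when $v\ne 0$ and $H(v)\ne H(v')$ it gives $V(f)^*\cap H(v')\subseteq H(v)$, contradicting the cutting property in (a) (as encoded in Lemma~\ref{lemma:blocking}). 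So these pairs use only hypothesis (a).

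The remaining, decisive case is $u'\ne 0$; rescaling, $u'=1$. Writing $g:=f+v'\cdot x$, so that $\overline{{\rm supp}}(c(1,v'))=V(g)^*=\{x\ne 0: f(x)+v'\cdot x=0\}$, both sub-cases collapse to one statement: if $u=0$ (with $v\ne 0$) a witness is an $x\in V(g)^*$ with $v\cdot x\ne 0$; and if $u\ne 0$, rescaling to $u=1$ and using $f(x)=-v'\cdot x$ on $V(g)^*$, the quantity $f(x)+v\cdot x$ equals $(v-v')\cdot x$, so a witness is an $x\in V(g)^*$ with $(v-v')\cdot x\ne 0$, where $v-v'\ne 0$ by linear independence. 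Hence it suffices to prove the \emph{Claim}: for every $v'$, the set $V(g)^*$ with $g=f+v'\cdot x$ lies in no hyperplane $H(w)$, $w\ne 0$ (equivalently, it spans $\F_q^n$). I would argue by contradiction: assume $V(g)^*\subseteq H(w)$. If $v'=0$ this says $V(f)^*\subseteq H(w)$, impossible since $V(f)^*$ is $n$-dimensional by (a). If $v'\ne 0$, then on $H(v')$ one has $v'\cdot x=0$, so $g=f$ there, giving $V(f)^*\cap H(v')\subseteq V(g)^*\subseteq H(w)$; the cutting property in (a) forbids $V(f)^*\cap H(v')$ from lying in any hyperplane other than $H(v')$, forcing $H(w)=H(v')$, and then $V(g)^*\subseteq H(v')$ yields $v'\cdot x=0$ and hence $f(x)=0$ for every $x\in V(g)^*$ — contradicting hypothesis (b), which furnishes an $x$ with $f(x)+v'\cdot x=0$ and $f(x)\ne 0$.

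I expect the Claim to be the main obstacle: the $u'=0$ cases merely repackage the cutting and $(1,n-1)$ hypotheses of (a), but the $u'\ne 0$ cases concern the \emph{translated} hypersurfaces $V(f+v'\cdot x)^*$, which no hypothesis controls directly. The precise role of condition (b) is to exclude the single residual configuration $H(w)=H(v')$ that the cutting property of $V(f)^*$ alone cannot rule out. Geometrically, lifting $x\mapsto(f(x),x)$ into $\F_q^{n+1}$, condition (b) says that the image meets every non-vertical hyperplane $H((1,v'))$ with $v'\ne0$ outside the hyperplane $\{t=0\}$, and this is exactly what prevents the translated hypersurface from collapsing onto $V(f)^*\cap H(v')$.
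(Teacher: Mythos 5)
Your proof is correct, and it takes a genuinely different route from the paper's. The paper fixes the containment $\overline{{\rm supp}}(c(u,v))\subseteq \overline{{\rm supp}}(c(u',v'))$ and splits into three cases according to whether $v=0$, $v'=0$, or both are nonzero, settling them respectively by the $n$-dimensionality of $V(f)^*$, by condition (b) together with the $(1,n-1)$-property, and by Lemma~\ref{lemma:blocking}, with the residual parallel case $H(v)=H(v')$ dispatched by scaling and a point where $f$ does not vanish. You instead split on the vanishing of $u'$ and $u$, and funnel all pairs with $u'\ne 0$ through a single statement the paper never isolates: your Claim that every translated hypersurface $V(f+v'\cdot x)^*$ spans $\F_q^n$. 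Its proof consumes the cutting property (the content of Lemma~\ref{lemma:blocking}) and condition (b) exactly once each, the latter precisely to exclude the one configuration $H(w)=H(v')$ that cutting cannot rule out --- a sharper accounting of where (b) enters than the paper's compressed remark in its case $v'=0$, and a natural pointer toward the paper's open problem of a purely geometric necessary and sufficient condition. What the paper's organization buys in exchange is directness: each case closes in one line from the matching hypothesis, with no auxiliary object introduced. One point worth making explicit in your write-up: identifying linear independence of codewords with linear independence of the parameter pairs $(u,v)\in\F_q^{n+1}$ uses the injectivity of $(u,v)\mapsto c(u,v)$, which is exactly the content of Proposition~\ref{Prop:parameters} (available here via Remark~\ref{rmk:nonlinear}), so your appeal to that proposition is doing double duty and should be flagged as such.
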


    \begin{proof}

The parameters of the code are clear by Proposition
\ref{Prop:parameters} and Remark \ref{rmk:nonlinear}.

    Let $c(u,v)$ and $c(u^{\prime},v^{\prime})$ be two nonzero codewords, with $c(u,v)\neq \lambda c(u^{\prime},v^{\prime})$ for any $\lambda \in \mathbb{F}_q^*$. Suppose that ${{\rm supp}}(c(u^{\prime},v^{\prime}))\subset {{\rm supp}}(c(u,v))$, \ that \ is \  $\overline{{\rm supp}}(c(u,v))\subset \overline{{\rm supp}}(c(u^{\prime},v^{\prime}))$.
    \begin{itemize}
        \item Suppose $v=0$. Then $\overline{{\rm supp}}(c(u,v))$ consists of all nonzero $x$ such that $f(x)=0$, i.e. $\overline{{\rm supp}}(c(u,v))=V(f)^*$. Since $\overline{{\rm supp}}(c(u,v))\subset \overline{{\rm supp}}(c(u^{\prime},v^{\prime}))$,
        $$u'f(x)+v'\cdot x=0$$
        for all $x\in V(f)^*$, that is $v'\cdot x=0$ for all $x\in V(f)^*$. Therefore, if $v'\ne 0$, then $V(f)^*\subset H(v')$, which gives a contradiction to the fact that $V(f)^*$ is $n$-dimensional. Hence $v'=0$ (and $u^\prime=\lambda u$).
        \item Suppose $v^{\prime}=0$. Then $\overline{{\rm supp}}(c(u',v'))$ consists of all nonzero $x$ such that $f(x)=0$, i.e. $\overline{{\rm supp}}(c(u',v'))=V(f)^*$.
        If $v\neq 0$, then $\overline{{\rm supp}}(c(u,v))\subset \overline{{\rm supp}}(c(u^{\prime},v^{\prime}))$ and {\rm (b)} implies that $H(v)^*\subset V(f)^*$, which is in contradiction with {\rm (a)}. So $v=0$ and $u=\lambda u$.

        \item Suppose $v,v^{\prime}\neq 0$. In this case $\overline{{\rm supp}}(c(u,v))\subset \overline{{\rm supp}}(c(u^{\prime},v^{\prime}))$ reads as
        \[
        u f(x)+v\cdot x=0\,\Rightarrow\,  u'f(x)+v'\cdot x=0
        \]
        for each $x\in\F_{q}^n\setminus\{0\}$. If $H(v)\ne H(v')$, we have a contradiction by Lemma \ref{lemma:blocking}. So $v'= \lambda v$. Taking $x$ such that $f(x)\neq 0$ (it exists by Remark \ref{rmk:nonlinear}), we get easily that $u'=\lambda u$.
    \end{itemize}
    Then ${{\rm supp}}(c(u^{\prime},v^{\prime}))\not \subset {{\rm supp}}(c(u,v))$ and $\mathcal{C}_f$ is minimal.
    \end{proof}

    \begin{remark}
    \label{rem:CONDITION}
    A necessary condition for {\rm (b)} to be true is the following:
    \begin{enumerate}
        \item[{\rm (c)}] for every non-zero vector $v$, $H(v) \cup V(f)$ is different from the whole space.
    \end{enumerate}
If $q=2$, clearly condition {\rm (c)} implies {\rm (b)}. Otherwise,
one possible easy-to-handle condition to have {\rm (b)} (assuming
{\rm (c)}) consists in asking $f$ to be constant on the lines with
homogeneous equation. In this case, since there exists $x$ such that
$v\cdot x$ and $f(x)$ are different from zero, then there exists a
non-zero scalar $\lambda$ such that $f(\lambda x)+\lambda v\cdot x=
f(x)+\lambda v \cdot x=0$ (it is enough to take $\lambda=-(v\cdot
x)^{-1}f(x))$). For example, one can choose an homogeneous
polynomial $P$, consider the set of its zeros $V(P)$ and then take
$f$ to be the characteristic function of $V(P)$.
    \end{remark}


   \begin{remark}
    For $q$ odd, let $f:\F_q^n\to \F_q$ be the function introduced in {\rm \cite{BB2019}}, that is
    \[
    f(x) =
    \begin{cases}
    \alpha_i, & {\rm wt}(x)=i\le k,\\
    0,& {\rm wt}(x)> k,\\
    \end{cases}
    \]
    where $n$ and $k$ are integers such that $n>3$ and $k\in \{2,\ldots,n-2\}$, and  $\{\alpha_i\}_{i\in\{1,\ldots,k\}}$ are $($not necessarily distinct$)$ elements of $\mathbb{F}_q^*$.

    The set $V(f)^*$ is an $n$-dimensional cutting vectorial $(1,n-1)$-blocking set in $\mathbb{A}(\F_q^n)$ :

\begin{itemize}
    \item Firstly, $V(f)^*\cap H(v)$ cannot be contained in any other $H(v^\prime)$ $($without loss of generality suppose that $v_n\ne0)$: since $v_n\neq 0$, then $x\in H(v)$ if and only if $x_n=-v_n^{-1}\cdot \sum_{i=1}^{n-1} x_iv_i$. If $x_i\neq 0$ for all $i\in\{1,\ldots,n-1\}$ and $x_n=-v_n^{-1}\cdot \sum_{i=1}^{n-1} x_iv_i$, then $x\in H(v)\cap V(f)^*$. If $x\in H(v')$, then

    $$0=\sum_{i=1}^n x_iv_i^\prime=v'_nx_n+\sum_{i=1}^{n-1} x_iv_i^\prime=-v'_nv_n^{-1}\cdot \sum_{i=1}^{n-1} x_iv_i+\sum_{i=1}^{n-1} x_iv_i^\prime$$
    that is
     $$\sum_{i=1}^{n-1} (v'_nv_i-v_nv_i^\prime)x_i=0.$$
    Since the last equality should hold for any $x$ such that $x_i\neq 0$ for all $i\in\{1,\ldots,n-1\}$, in particular it holds for $x$ with $x_1=\ldots=x_{n-1}=1$ and for  $\overline{x}^{(j)}$ with $\overline{x}^{(j)}_1=\ldots=\overline{x}^{(j)}_{j-1}=\overline{x}^{(j)}_{j+1}=\ldots=\overline{x}^{(j)}_{n-1}=1$ and $\overline{x}^{(j)}_j=-1$, for any $j\in \{1,\ldots,n-1\}$. We get, for every $j\in \{1,\ldots,n-1\}$,
    $$2(v'_nv_j-v_nv_j^\prime)=0$$
     and, since $q$ is odd,
    $v_j^\prime=v_n^{-1}v'_nv_j$
    for every $j\in \{1,\ldots,n\}$, so that $H(v)=H(v')$.
\item Secondly, we show that $V(f)^*$ does not contain any hyperplane without the origin $H(v)^*$ (without loss of generality suppose again that $v_n\ne 0$): $H(v)$ always contains a vector $u$ such that ${\rm wt}(u)\le2$ (in this case $f(u)\neq 0$, so $u\not\in V(f)^*$), in particular if ${\rm wt}(v)=1$ then $e_1\in H(v)$ and if there exists $j\in\{1,\dots,n-1\}$ such that $v_j\ne0$, then $u=v_ne_j-v_je_n\in H(v).$ In both cases we found a vector in $H(v)\setminus V(f)^*$.
\end{itemize}

In this case, condition {\rm (b)} of Theorem \ref{Th:Main} is
clearly verified.

    \end{remark}

    \subsection{Projective case}

    We may extend the general construction of the code done above to the projective space, since it can be easily adapted to the projective case without notable differences.

    In this case, in the definition of the code, instead of considering all vectors in $\F_q^n\setminus \{0\}$, we will take a nonzero vector from each $1$-dimensional subspace of $\F_q^n$ (that is one representative of each projective point in $\mathbb{P}(\F_q^n)$). This is a quite standard choice in coding theory (the same that one usually does to define simplex and Hamming codes in the non-binary case \cite{HP}).

    For every homogeneous (polynomial) function $f:\mathbb{F}_q^{n}\to \mathbb{F}_q$, let $\widetilde{\mathcal{C}}_F$ be the linear code defined as
    \begin{equation}\label{Def:Code_Proj}
    \widetilde{\mathcal{C}}_f :=\{(uf(x)+v\cdot x)_{x \in \mathbb{P}(\F_q^{n})} \ \mid u\in\F_q\,, v \in \mathbb{F}_q^{n}\}.
    \end{equation}

    Here we are supposing, as explained above, to have a fixed ordering of the points in $\mathbb{P}(\F_q^{n})$ and to chose one representative for each projective point. What follows will not depend on these choices: different choices yield equivalent codes.
    We keep the notations as in the affine case.

If $f$ is non-linear, then $\widetilde{\mathcal{C}}_f$ is a
$[(q^n-1)/(q-1),n+1]$ code (the proof is exactly the same as for
Proposition \ref{Prop:parameters}). Moreover, analogously to the
affine case, a generator matrix of $\widetilde{\mathcal{C}}_f$ can
be obtained by extending the generator matrix of a simplex code by a
line from the evaluation of the homogeneous function $f$. Therefore,
${\rm wt}(c(0,v))=(q^n-q^{n-1})/(q-1)$ and ${\rm
wt}(c(u,0))=(q^n-1)/(q-1)-\#V_p(f)$.

An analogue of Lemma \ref{lemma:AB} holds.

\begin{lemma}\label{lemma:AB_Proj}
If $\displaystyle \# V_p(f)\geq \frac{2q^{n-1}-q^{n-2}-1}{q-1}$,
then
\[
\frac{w_{max}}{w_{min}} \geq \frac{q}{q-1}.
\]
\end{lemma}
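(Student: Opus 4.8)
The plan is to imitate the proof of Lemma~\ref{lemma:AB}, exploiting the fact that we already have two explicit nonzero weights of $\widetilde{\mathcal{C}}_f$ in hand and that, as recorded in that proof, the quotient $w_{max}/w_{min}$ dominates the quotient $w_2/w_1$ of any two weights satisfying $w_{min}\le w_1\le w_2\le w_{max}$. It therefore suffices to exhibit two codewords whose weights already have ratio at least $q/(q-1)$, and the natural candidates are the two weights computed just before the statement.

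Concretely, those weights are, for $v\ne 0$, ${\rm wt}(c(0,v))=(q^n-q^{n-1})/(q-1)=q^{n-1}$, and, for $u\ne 0$, ${\rm wt}(c(u,0))=(q^n-1)/(q-1)-\#V_p(f)$. First I would substitute the hypothesis $\#V_p(f)\ge(2q^{n-1}-q^{n-2}-1)/(q-1)$ into the second expression; a short simplification yields
\[
{\rm wt}(c(u,0))\le \frac{q^n-1}{q-1}-\frac{2q^{n-1}-q^{n-2}-1}{q-1}=\frac{q^{n-2}(q-1)^2}{q-1}=q^{n-2}(q-1).
\]
The threshold in the hypothesis is engineered precisely so that $(q^n-1)-(2q^{n-1}-q^{n-2}-1)=q^{n-2}(q-1)^2$ factors cleanly, which is the one arithmetic point worth checking carefully.

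Next I would set $w_1={\rm wt}(c(u,0))$ and $w_2={\rm wt}(c(0,v))=q^{n-1}$. Since $q^{n-2}(q-1)\le q^{n-1}$, the bound above guarantees $w_1\le w_2$, so these two weights are genuinely ordered inside $[w_{min},w_{max}]$ and the trivial inequality from Lemma~\ref{lemma:AB} applies. Combining it with the bound on $w_1$ gives
\[
\frac{w_{max}}{w_{min}}\ge \frac{w_2}{w_1}\ge \frac{q^{n-1}}{q^{n-2}(q-1)}=\frac{q}{q-1},
\]
which is exactly the assertion. There is no substantive obstacle: the argument is the projective mirror of the affine case, and the only care required is the simplification in the displayed bound together with the comparison $w_1\le w_2$ needed to legitimately invoke the trivial fact.
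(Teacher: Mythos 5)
Your proposal is correct and takes essentially the same approach as the paper: the paper gives no separate proof, stating only that the lemma follows as the analogue of Lemma~\ref{lemma:AB}, and your argument is exactly that analogue made explicit, applying the trivial ratio inequality to the two known weights ${\rm wt}(c(0,v))=q^{n-1}$ and ${\rm wt}(c(u,0))\le q^{n-2}(q-1)$. The key simplification $(q^n-1)-(2q^{n-1}-q^{n-2}-1)=q^{n-2}(q-1)^2$ and the ordering check $q^{n-2}(q-1)\le q^{n-1}$ are both accurate.
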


Clearly, for a homogeneous function $f$, $V(f)^*$ satisfies the
condition of Lemma \ref{lemma:AB} if and only if $V_p(f)$ satisfies
the condition of Lemma \ref{lemma:AB_Proj}.

\begin{remark}
The projective context seems to be a more natural scenario: there
are no restricitions on hyperplanes and, as we mentioned already, in
the projective case the simplex code is a subcode of codimension $1$
in $\widetilde{\mathcal{C}}_f$. On the other hand, we have less
freedom in the choice of the function.
\end{remark}

Finally, we state, without proving it, the projective analogue (note
that the dimensions drop by $1$) of Theorem \ref{Th:Main}. The proof
works exactly in the same way as in the affine case.

\begin{theorem}\label{Th:Main_Proj}:
Let $f:\F_q^n\to \F_q$ be a homogeneous function. If

\begin{enumerate}
    \item[{\rm(a)}] $V_p(f)$ is an
$(n-1)$-dimensional cutting projective $(1,n-2)$-blocking set in
$\mathbb{P}(\F_q^n)$;
\item[{\rm (b)}] for every non-zero vector $v$, it exists  $x$ such that $f(x) + v\cdot x= 0$ and $f(x)$ is different from 0,
\end{enumerate}
\noindent then $\widetilde{\mathcal{C}}_f$ defined as in
\eqref{Def:Code_Proj} is a $[(q^n-1)/(q-1),n+1]$ minimal code over
$\F_q$.
\end{theorem}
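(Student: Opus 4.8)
The plan is to transcribe the proof of Theorem \ref{Th:Main} into the projective setting, replacing throughout the vectorial objects by their projective counterparts: $\F_q^n\setminus\{0\}$ by a fixed set of representatives of $\mathbb{P}(\F_q^n)$, the affine hypersurface $V(f)^*$ by the projective hypersurface $V_p(f)$, the hyperplanes through the origin $H(v)$ by the projective hyperplanes $H(v)$, and the numerical data ``$n$-dimensional'' and ``$(1,n-1)$'' by ``$(n-1)$-dimensional'' and ``$(1,n-2)$''. First I would record the two preliminary facts. The parameters $[(q^n-1)/(q-1),n+1]$ follow, as already observed in the text, from the verbatim analogue of Proposition \ref{Prop:parameters}, while the analogue of Remark \ref{rmk:nonlinear} shows that hypothesis {\rm (a)} forces $f$ to be non-linear with $V_p(f)\ne\mathbb{P}(\F_q^n)$, since a $(1,n-2)$-blocking set cannot contain a projective hyperplane. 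I would also restate and reprove Lemma \ref{lemma:blocking} projectively: the intersection of $V_p(f)$ with any projective hyperplane $H(v)$ fails to lie in another $H(v')$ exactly when, for all $v,v'$ with $H(v)\ne H(v')$, there is a projective point $x$ with $uf(x)+v\cdot x=0$ and $u'f(x)+v'\cdot x\ne0$ for every $u,u'\in\F_q$; the proof is unchanged, because the two quantified conditions say precisely $x\in V_p(f)\cap H(v)$ and $x\notin H(v')$.

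With these in hand, the minimality argument is the same three-case split. Fixing representatives, I take nonzero codewords $c(u,v)$ and $c(u',v')$ that are not proportional and assume $\overline{{\rm supp}}(c(u,v))\subseteq\overline{{\rm supp}}(c(u',v'))$. If $v=0$, then $\overline{{\rm supp}}(c(u,v))=V_p(f)$, so $V_p(f)\subseteq H(v')$; since $V_p(f)$ is $(n-1)$-dimensional it spans, forcing $v'=0$, whence both codewords are multiples of $c(1,0)$, a contradiction. If $v'=0$ and $v\ne0$, the subcase $u=0$ gives $H(v)\subseteq V_p(f)$ directly, while for $u\ne0$ (rescale the codeword so that $u=1$) hypothesis {\rm (b)} produces an $x\notin V_p(f)$ with $f(x)+v\cdot x=0$, i.e. $x\in\overline{{\rm supp}}(c(1,v))$, contradicting the containment; either way this contradicts {\rm (a)}. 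Finally, if $v,v'\ne0$, the containment reads $uf(x)+v\cdot x=0\Rightarrow u'f(x)+v'\cdot x=0$; the projective Lemma \ref{lemma:blocking} rules out $H(v)\ne H(v')$, so $v'=\lambda v$, and evaluating at any $x\notin V_p(f)$ yields $u'=\lambda u$, again contradicting non-proportionality. Hence no proper containment occurs and $\widetilde{\mathcal{C}}_f$ is minimal.

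The one genuinely new point, and the step I expect to need the most care, is the well-definedness of the vanishing condition on projective points. When $f$ is homogeneous of degree $d>1$, the quantity $uf(x)+v\cdot x$ is not homogeneous, so whether a single entry $c(u,v)_x$ vanishes depends on the chosen representative of $x$. I would stress that this causes no problem: the representatives are fixed once and for all and used simultaneously for both codewords, so the support comparison is internally consistent, and, crucially, every geometric condition actually invoked ($x\in V_p(f)$, $x\in H(v)$, and the cutting and blocking properties of {\rm (a)}) is genuinely projective, hence independent of the representative. This is precisely what legitimizes importing the affine reasoning wholesale; once it is in place, the remainder is a routine translation of the proof of Theorem \ref{Th:Main}.
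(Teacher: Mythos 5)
Your overall route is exactly the paper's: the paper omits the proof of Theorem \ref{Th:Main_Proj} and simply asserts it works as in the affine case, and your three-case transcription, with the projective restatement of Lemma \ref{lemma:blocking}, is the intended argument. Cases handled through that lemma are indeed sound, precisely because there the quantification over all $u,u'$ degenerates the vanishing conditions to $x\in V_p(f)\cap H(v)$ and $x\notin H(v')$, which are honestly projective. The gap is in the step you yourself single out as the crux, and your resolution of it is wrong. It is \emph{not} true that ``every geometric condition actually invoked is genuinely projective'': hypothesis {\rm (b)} asserts the existence of a \emph{vector} $x$ with $f(x)+v\cdot x=0$ and $f(x)\neq 0$, and this condition is not invariant under rescaling $x$. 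If $f$ is homogeneous of degree $d$ and the fixed representative of the class of $x$ is $\tilde x=\mu x$, then $f(\tilde x)+v\cdot\tilde x=\mu^d f(x)+\mu\, v\cdot x=\mu f(x)\,(\mu^{d-1}-1)$, which is nonzero whenever $\mu^{d-1}\neq 1$. So in your case $v'=0$, $u\neq 0$, the point furnished by {\rm (b)} need not index a zero coordinate of $c(1,v)$, since the coordinate at $[x]$ is evaluated at $\tilde x$, not at $x$; the contradiction with the containment evaporates unless $(q-1)\mid(d-1)$ (e.g. $q=2$, or $d=1$, which {\rm (a)} excludes). The paper's own example $f_{r,k}$ has $d=r\geq 2$, so the issue is not vacuous. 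The natural repair is to read {\rm (b)} as a statement about the fixed representatives (for every $v\neq 0$ and every $u\in\F_q^*$, some \emph{chosen representative} $x$ satisfies $uf(x)+v\cdot x=0$ and $f(x)\neq 0$); with that reading your transcription goes through, but as it stands your justification rests on a false invariance claim. (In fairness, the paper glosses over this entirely, and its assertion that different representative choices yield equivalent codes suffers from the same inhomogeneity of $uf(x)+v\cdot x$.)

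A second, smaller slip occurs at the end of your third case: ``evaluating at any $x\notin V_p(f)$ yields $u'=\lambda u$'' is not right as written. The containment only gives information at points where the antecedent $uf(x)+v\cdot x=0$ holds, so to deduce $(u'-\lambda u)f(x)=0$ you need an $x$ in $\overline{{\rm supp}}(c(u,v))$ with $f(x)\neq 0$; for $u\neq 0$ this is again hypothesis {\rm (b)} (hence subject to the representative problem above), while for $u=0$ it follows from $H(v)\not\subset V_p(f)$, which {\rm (a)} guarantees. Your case-two subdivision into $u=0$ and $u\neq 0$ is the correct refinement (the paper's affine proof is terser there), so once the representative-aware version of {\rm (b)} is adopted, both defective steps close and the proof is complete.
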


\bigskip
    \section{An example: a family of good functions}\label{sec:functions}

    Let $n=rk$ be a positive integer and consider the hypersurface $V(f_{r,k})$ of $\mathbb{A}(\F_{q}^{n})$ defined by the (polynomial) function $f_{r,k}:\F_q^n\to \F_q$

    $$f_{r,k}(x_1,\ldots,x_n):=\sum_{j=0}^{k-1} x_{jr+1}x_{jr+2}\cdots x_{jr+r}.
    $$

    \begin{theorem}\label{Prop:properties}
    If $k\geq 2$ and $r\geq 2$, then $V(f_{r,k})^*$
    is an $n$-dimensional cutting vectorial $(1,n-1)$-blocking set  and for every non-zero vector $v$, it exists  $x$ such that $f(x) + v\cdot x= 0$ and $f(x)$ is different from 0.
    \end{theorem}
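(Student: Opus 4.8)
The plan is to verify the two required properties of $V(f_{r,k})^*$ by splitting them, via the equivalences recalled before Theorem~\ref{Th:Main} (together with Lemma~\ref{lemma:blocking}), into three statements: (i) for every nonzero $v$ and every $v'$ with $H(v)\neq H(v')$ there is $x\in V(f_{r,k})^*\cap H(v)$ with $v'\cdot x\neq 0$ (equivalently, $V(f_{r,k})^*\cap H(v)$ spans $H(v)$); (ii) for every nonzero $v$ one has $H(v)^*\not\subseteq V(f_{r,k})^*$; and (iii) condition~(b). By Theorem~\ref{lemma:ndim}, statement (i) is exactly what is needed to conclude that $V(f_{r,k})^*$ is an $n$-dimensional cutting vectorial blocking set, (ii) supplies the $(1,n-1)$ property, and (iii) is the second assertion. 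Throughout I write $P_j(x)=\prod_{i=1}^{r}x_{jr+i}$, so that $f_{r,k}=\sum_{j=0}^{k-1}P_j$, and I call $\{jr+1,\dots,jr+r\}$ the $j$-th block.

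Statements (ii) and (iii) will both follow from one elementary construction. Given $v\neq 0$, I choose a coordinate $a$ with $v_a\neq 0$, lying in some block $j^\ast$, and a second block $j_1\neq j^\ast$ (which exists since $k\geq 2$). Let $x$ have all coordinates of block $j_1$ equal to $1$, coordinate $a$ equal to a scalar $t$, and all remaining coordinates equal to $0$. Then only block $j_1$ contributes to $f$, so $f(x)=1\neq 0$, while $v\cdot x$ is an affine function of $t$ with nonzero leading coefficient $v_a$. Hence $t$ can be chosen so that $v\cdot x=0$ (yielding a point of $H(v)^*\setminus V(f_{r,k})^*$, proving (ii)) or so that $v\cdot x=-1$ (yielding $f(x)+v\cdot x=0$ with $f(x)\neq 0$, proving (iii)). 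This argument is uniform in $q$.

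The core is (i), which I prove by contradiction: assume $V(f_{r,k})^*\cap H(v)\subseteq H(v')$ with $v'\notin\langle v\rangle$. For each transversal $(i_0,\dots,i_{k-1})\in\{1,\dots,r\}^k$ the coordinate subspace $W=\{x:\,x_{jr+i_j}=0 \text{ for all } j\}$ lies in $V(f_{r,k})$, since every $P_j$ then acquires a zero factor; therefore $W\cap H(v)\subseteq H(v')$. Restricting the forms $v,v'$ to $W$, a linear form vanishing on $\{\,v|_W=0\,\}$ must be proportional to $v|_W$ (both being zero if $v|_W=0$), so $v'_a v_b=v'_b v_a$ for every pair of coordinates $a,b$ surviving in $W$. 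Since $r\geq 2$, any two coordinates can be made to survive simultaneously \emph{except} when they lie in the same block and $r=2$. Thus for $r\geq 3$ the relation $v'_a v_b=v'_b v_a$ holds for all $a,b$, forcing $v'\parallel v$, a contradiction. For $r=2$ it holds for all cross-block pairs, and if the support of $v$ meets at least two blocks, fixing a coordinate $a$ with $v_a\neq 0$ and propagating the proportionality through a coordinate $b^\ast\neq a$ with $v_{b^\ast}\neq 0$ in another block again yields $v'\parallel v$.

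The only configuration the coordinate-subspace relations do not reach, and the step I expect to be the main obstacle, is $r=2$ with $v$ supported on a single block $\{a,a'\}$ (say $v_a\neq 0$): here nothing constrains the pair $(a,a')$, and the deductions above leave $v'=cv+\delta e_{a'}$ with $\delta\neq 0$. I will resolve it with a genuinely quadratic \emph{balanced} point. Since $v'\cdot x=\delta x_{a'}$ for $x\in H(v)$, it suffices to produce $x\in V(f_{r,k})^*\cap H(v)$ with $x_{a'}\neq 0$. I set $x_{a'}=1$ and $x_a=-v_a^{-1}v_{a'}$, so that $x\in H(v)$ and the block $\{a,a'\}$ contributes $P_{j_0}=-v_a^{-1}v_{a'}$; I then cancel this value using a second block $j_1$ (available since $k\geq 2$), whose two coordinates I set to $(v_a^{-1}v_{a'},1)$, all other coordinates being $0$, so that $f(x)=0$ and $x\neq 0$. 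This point lies in $V(f_{r,k})^*\cap H(v)$ but not in $H(v')$, completing the contradiction and hence (i). (When $v_{a'}=0$ no balancing is needed, and $x=e_{a'}$ already works.) Everything outside this single quadratic case reduces to the two elementary constructions above.
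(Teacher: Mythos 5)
Your proof is correct, and although it rests on the same basic fact as the paper's proof --- vectors with a zero coordinate in every block lie in $V(f_{r,k})$, so one pins $v'$ down to a multiple of $v$ up to a residual degree of freedom and then kills that freedom with a tailored point of $V(f_{r,k})^*\cap H(v)$ --- your implementation is genuinely different. The paper normalizes $v_n\neq 0$ and works with the explicit points $p_i=e_i-(v_n)^{-1}v_ie_n\in H(v)\cap V(f_{r,k})$ (all $i\leq n-1$ when $r>2$, only $i\leq n-2$ when $r=2$), which forces a three-case analysis for $r=2$ phrased relative to that normalization (whether $v_{n-1}=0$, whether some $v_j\neq 0$ with $j\leq n-2$, or $v=v_{n-1}e_{n-1}+v_ne_n$). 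You instead feed entire transversal coordinate subspaces $W\subseteq V(f_{r,k})$ into the assumed containment $V(f_{r,k})^*\cap H(v)\subseteq H(v')$ and use the standard dual-space fact that a functional vanishing on $\ker(v|_W)$ is proportional to $v|_W$, obtaining the symmetric minor relations $v'_av_b=v'_bv_a$ for every pair of coordinates surviving in a common $W$; this dispatches $r\geq 3$ in one stroke, requires no normalization, and isolates the exceptional configuration intrinsically ($r=2$ with $\mathrm{supp}(v)$ inside a single block), where your single ``balanced'' point --- I checked it: $v\cdot x=0$, the two block contributions $-v_a^{-1}v_{a'}$ and $v_a^{-1}v_{a'}$ cancel in $f$, and $v'\cdot x=\delta\neq 0$ --- finishes the contradiction, so the paper's three cases collapse into one. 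Your handling of the $(1,n-1)$ property and of condition (b) through a single one-parameter family of points on which $f\equiv 1$, with $v\cdot x$ affine in $t$ of slope $v_a\neq 0$, is likewise more uniform than the paper's two separate constructions ($\overline{p}$ for the former, $\hat{x}$ for the latter), and it sidesteps the paper's slightly delicate ``without loss of generality $i_1=1$, $v_1=1$'' step in (b), which strictly speaking needs the block-permutation symmetry of $f_{r,k}$ and an adjustment of the construction rather than a rescaling of $v$. What the paper's route buys is a completely explicit, point-by-point verification with no linear-algebra lemma; what yours buys is coordinate symmetry, fewer cases, and uniformity in $q$, $r$ and $k$.
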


    \begin{proof}
 Let $H(v)$ be a hyperplane. It exists $i$ such that $v_i\neq 0$. To simplify the notation, suppose, without loss of generality, that $i=n$, so that $x_n=-(v_n)^{-1}(v_1x_1+\ldots+v_{n-1}x_{n-1})$.

 For all $i\in \{1,\ldots,n-1\}$, let
 $$p_i:=e_i-(v_n)^{-1}v_ie_n=(0,\ldots,0,1,0,\ldots,0,-(v_n)^{-1}v_i)$$
and let
$$\overline{p}:=p_1+\ldots+p_r=(\underbrace{1,\ldots,1}_{r \text{ times}},0,\ldots,0,-(v_n)^{-1}(v_1+\ldots+v_{r})).$$
\medskip

We observe that $\overline{p}\in H(v)^*$ but not in $V(f_{r,k})^*$.
So $H(v)^*\not\subset V(f_{r,k})^*$ (property {\rm (b)}).
\medskip

Let us prove that property {\rm (a)} holds too.

Let $r>2$. In this case, $\{p_1,\ldots,p_{n-1}\}\subset H(v)\cap
V(f_{r,k})$. Suppose that $H(v)\cap V(f_{r,k})\subset H(v')$. Then
$v'\cdot p_i=v_i'-((v_n)^{-1}v'_n)v_i=0$ for all
$i\in\{1,\ldots,n-1\}$. Clearly $v_n'-((v_n)^{-1}v'_n)v_n=0$. So
$v'=((v_n)^{-1}v'_n)v,$ which implies $H(v')=H(v)$.
\medskip

Let $r=2$. In this case, $\{p_1,\ldots,p_{n-2}\}\subset H(v)\cap
V(f_{2,k})$. Suppose that $H(v)\cap V(f_{2,k})\subset H(v')$. Then,
as above, $v'_i=((v_n)^{-1}v'_n)v_i$ for $i\in \{1,\ldots,n-2,n\}$.
If $H(v)\ne H(v')$, this means that $v'=v+ke_{n-1}$, with $k\neq 0$.
There are three cases to be considered: if $v_{n-1}=0$, then
$e_{n-1}\in H(v)\cap V(f_{2,k})$ but not in $H(v')$; if $v_{n-1}\ne
0$ and it exists $j\in \{1,\ldots,n-2\}$ such that $v_j\ne 0$, then
$v_{n-1}e_j-v_je_{n-1}$ in $H(v)\cap V(f_{2,k})$ but not in $H(v')$;
if $v_{n-1} \ne 0$ and $v_j=0$ for all $j\in\{1,\ldots,n-2\}$, then
$v_ne_1+v_{n-1}e_2-v_ne_{n-1}+v_{n-1}e_n \in H(v)\cap V(f_{2,k})$
but not in $H(v')$. In all three cases we get then a contradiction.

In order to prove that for every non-zero vector $v$, it exists  $x$
such that $f(x) + v\cdot x= 0$ and $f(x)$ is different from 0, let
$\{i_1,..,i_r\}$ be the support of the vector $v$. We will have that
$v\cdot x=v_{i_1}x_{i_1}+...+ v_{i_r}x_{i_r}$. Without loss of
generality we can consider $i_1=1$, $v_1=1$ and $x$ to be such that
the following relation holds (since $x$ does not annihilate $v\cdot
x$)
\[
x_{1}+ v_{i_2}x_{i_2} +...+ v_{i_r}x_{i_r} =1
\]
and so
\[
x_{1} =1-(v_{i_2}x_{i_2} +...+ v_{i_r}x_{i_r}).
\]
If we take $\hat{x}$ having its components as follows
\[
\hat{x}_i=\begin{cases}
1-(v_{i_2}x_{i_2} +...+ v_{i_r}x_{i_r})\quad \text{ for } i=1\\
0\qquad \qquad \qquad \qquad \qquad \quad \text{ for } i\in\{2,\dots,r(k-1)\}\\
1\qquad \qquad \qquad \qquad \qquad \quad \text{ for }
i\in\{r(k-1)+1,\dots,rk-1\}
\\-1\quad \qquad \qquad \qquad \qquad\quad\,  \text{ for } i=rk\\\end{cases}
\]
then $\hat{x}$ is such that $f(\hat{x})=-1\ne0$, $v\cdot
\hat{x}=1\ne0$ and $f(\hat{x})+v\cdot \hat{x}=0$.

    \end{proof}






    \begin{corollary}
    If $r\geq 2$ and $k\geq 2$, the code $\mathcal{C}_{f_{r,k}}$ is minimal.
    \end{corollary}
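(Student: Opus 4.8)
The plan is to apply Theorem \ref{Th:Main} directly to the function $f = f_{r,k}\colon \F_q^n \to \F_q$, where $n = rk$. That theorem guarantees minimality of $\mathcal{C}_f$ as soon as its two hypotheses hold: condition {\rm (a)}, that $V(f)^*$ be an $n$-dimensional cutting vectorial $(1,n-1)$-blocking set in $\mathbb{A}(\F_q^n)$, and condition {\rm (b)}, that for every non-zero $v \in \F_q^n$ there exist an $x$ with $f(x) + v\cdot x = 0$ and $f(x)$ different from $0$. So the entire task reduces to checking that $f_{r,k}$ meets these two requirements.

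Both requirements are, however, exactly the assertions established in Theorem \ref{Prop:properties} under the standing hypotheses $r \geq 2$ and $k \geq 2$. Accordingly, I would first record that $f_{r,k}$ is a genuine function $\F_q^n \to \F_q$ with $n = rk$, so that the general construction \eqref{Def:Code} of Section \ref{sec:codes} indeed produces the code $\mathcal{C}_{f_{r,k}}$. Then I would invoke Theorem \ref{Prop:properties} to obtain conditions {\rm (a)} and {\rm (b)} verbatim, and finally feed these into Theorem \ref{Th:Main} to conclude that $\mathcal{C}_{f_{r,k}}$ is a $[q^n-1,\,n+1]$ minimal code over $\F_q$.

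Since all the substantive work --- the blocking-set geometry behind {\rm (a)} and the explicit existence argument behind {\rm (b)} --- has already been carried out in the proof of Theorem \ref{Prop:properties}, there is no real obstacle here: the corollary is a purely formal consequence of the two preceding theorems. The only point deserving a moment's attention is that the hypotheses of Theorem \ref{Prop:properties} are phrased identically to conditions {\rm (a)} and {\rm (b)} of Theorem \ref{Th:Main}, so the match is immediate; in particular, no separate verification that $f_{r,k}$ is non-linear is required, as this already follows from {\rm (b)} via Remark \ref{rmk:nonlinear}.
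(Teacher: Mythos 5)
Your proposal is correct and coincides with the paper's own proof, which is exactly the one-line deduction ``It follows directly by Theorem \ref{Prop:properties} and Theorem \ref{Th:Main}'' that you spell out in more detail. The only tiny inaccuracy is your attribution of non-linearity to condition {\rm (b)}: Remark \ref{rmk:nonlinear} derives it from the $(1,n-1)$-blocking property in condition {\rm (a)}, but this does not affect the argument.
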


    \begin{proof}
    It follows directly by Theorem \ref{Prop:properties} and Theorem \ref{Th:Main}.
    \end{proof}

    \begin{lemma}\label{Prop:number}
The cardinality of $V(f_{r,k})$ is
    \begin{equation}\label{eq:number}
        \#V(f_{r,k})=(q-1)\cdot q^{k-1}\cdot (q^{r-1}-(q-1)^{r-1})^k+q^{rk-1}.
    \end{equation}
    \end{lemma}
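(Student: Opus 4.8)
The plan is to count the zeros of $f_{r,k}$ by exploiting that the defining polynomial is a sum of $k$ products taken over pairwise disjoint blocks of $r$ variables, so that the count factors block by block. I would fix a nontrivial additive character $\psi$ of $\F_q$ and use the standard identity
\[
\#V(f_{r,k}) = \frac{1}{q}\sum_{t\in\F_q}\ \sum_{x\in\F_q^n}\psi\big(t\,f_{r,k}(x)\big).
\]
Since $f_{r,k}(x)=\sum_{j=0}^{k-1} x_{jr+1}\cdots x_{jr+r}$ is a sum of $k$ terms involving disjoint sets of variables, for each fixed $t$ the inner sum factors as $\prod_{j=0}^{k-1} S(t)$, where $S(t):=\sum_{(z_1,\dots,z_r)\in\F_q^r}\psi\big(t\,z_1\cdots z_r\big)$ is the same sum for every block.

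The key computation is the evaluation of the single-block sum $S(t)$. For $t=0$ one has $S(0)=q^r$. For $t\neq 0$ I would first determine the distribution of the product $z_1\cdots z_r$: it equals $0$ for exactly $q^r-(q-1)^r$ tuples (those with at least one zero coordinate), and it equals any prescribed nonzero value for exactly $(q-1)^{r-1}$ tuples (choose $z_1,\dots,z_{r-1}\in\F_q^*$ freely and solve for $z_r$). Hence
\[
S(t)=\big(q^r-(q-1)^r\big)+(q-1)^{r-1}\sum_{c\in\F_q^*}\psi(tc)=q^r-(q-1)^r-(q-1)^{r-1},
\]
using $\sum_{c\in\F_q^*}\psi(tc)=-1$ for $t\neq 0$; factoring gives $S(t)=q\big(q^{r-1}-(q-1)^{r-1}\big)$ for every $t\neq 0$.

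Substituting back and separating the $t=0$ term from the $q-1$ equal terms with $t\neq 0$ yields
\[
\#V(f_{r,k})=\frac1q\Big(q^{rk}+(q-1)\big(q\,(q^{r-1}-(q-1)^{r-1})\big)^k\Big),
\]
which simplifies to the claimed expression $q^{rk-1}+(q-1)q^{k-1}(q^{r-1}-(q-1)^{r-1})^k$. The only genuinely delicate points are the count $(q-1)^{r-1}$ of $r$-tuples with a fixed nonzero product and the character identity $\sum_{c\in\F_q^*}\psi(tc)=-1$; everything else is bookkeeping, and I expect getting the block product distribution exactly right to be the main thing to watch. As an alternative that avoids characters, one could convolve the $k$ block-distributions directly: writing $N_0=q^r-(q-1)^r$ and $N_c=(q-1)^{r-1}$ for $c\neq 0$, the number of solutions of $\sum_j y_j=0$ is the coefficient of the identity in the $k$-th power, inside the group algebra of $(\F_q,+)$, of the element $N_0\,[0]+(q-1)^{r-1}\sum_{c\neq0}[c]$; decomposing this along the idempotent $\tfrac1q\sum_c[c]$ and its complement produces the same two terms $q^{rk-1}$ and $(q-1)q^{k-1}(q^{r-1}-(q-1)^{r-1})^k$ with no computation beyond $U^2=qU$ for $U=\sum_c[c]$.
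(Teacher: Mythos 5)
Your proof is correct, and it takes a genuinely different route from the paper's. The paper argues recursively: it first establishes the same single-block distribution you use (the product $x_1\cdots x_r$ vanishes for $q^r-(q-1)^r$ tuples and takes each nonzero value exactly $(q-1)^{r-1}$ times), then sets $Z(k):=\#V(f_{r,k})$ and derives the recursion $Z(k)=Z(k-1)\cdot(q^r-(q-1)^r)+(q^{(k-1)r}-Z(k-1))\cdot(q-1)^{r-1}$ by conditioning on whether the first $k-1$ blocks already sum to zero, finally verifying the closed formula by induction on $k$. You instead exploit the independence of all $k$ blocks at once via additive characters: orthogonality reduces the count to powers of the single-block sum $S(t)$, and your evaluation is right --- $q^r-(q-1)^r-(q-1)^{r-1}=q^r-q(q-1)^{r-1}=q\bigl(q^{r-1}-(q-1)^{r-1}\bigr)$ for $t\neq 0$ --- as is the final substitution, since $\frac{1}{q}\bigl(q^{rk}+(q-1)q^k(q^{r-1}-(q-1)^{r-1})^k\bigr)$ matches the claimed expression. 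The trade-off: your method yields the closed form in one shot with no need to guess it first, and it generalizes immediately to any function that is a sum of terms in pairwise disjoint variable blocks (and to counting the fibers $f_{r,k}=c$ with $c\neq 0$, since only the $t=0$ versus $t\neq 0$ split changes); the paper's recursion is more elementary, using nothing beyond counting, at the cost of an induction that must be handed the answer in advance. Your group-algebra variant is essentially the character argument in disguise --- the relation $U^2=qU$ for $U=\sum_c[c]$ encodes the idempotent decomposition that characters diagonalize --- but it is a clean way to keep the one-shot structure while remaining character-free, so either of your two versions would serve as a complete and correct proof.
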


    \begin{proof}
    Let us start considering the monomial $x_1x_2\dots x_r$. It is clearly non-zero for $(q-1)^r$ choices of $(x_1,x_2,\dots,x_r)$, and then it is zero for $q^r-(q-1)^r$ choices. Moreover, it takes all non-zero values equally often, so for a $t\in\F_q$ we have that it assumes the value $t$ exactly $(q-1)^{r-1}$ times.
    Let us fix $r$ and define $Z(k):=\#V(f_{r,k})$.
    Note that
    $$V(f_{r,k})=V(f_{r,k-1})\times V(f_{r,1})+$$
    $$+(\F_q^{(k-1)r}\setminus V(f_{r,k-1}))\times\{(x_1,\ldots,x_r)\ | \ f_{r,1}(x_1,\ldots,x_r)=t\neq 0\}$$
 So we get the following recursive formula to compute $Z(k)$:
    \[
    \begin{split}
        Z(k)=&Z(k-1)\cdot(q^r-(q-1)^r)+(q^{(k-1)r}-Z(k-1))\cdot (q-1)^{r-1}\\
        =&Z(k-1)\cdot q\cdot (q^{r-1}-(q-1)^{r-1})+q^{(k-1)r}\cdot (q-1)^{r-1}\\
    \end{split}
    \]

    Let us prove \eqref{eq:number} by induction on $k$.

    For $k=1$, we have
    $$Z(1)=(q-1)\cdot (q^{r-1}-(q-1)^{r-1})+q^{r-1}=q^r-(q-1)^r,$$
 which is the correct number, as observe above. So we focus on the inductive step:

    \[
    \begin{array}{rl}
        Z(k)=&Z(k-1)\cdot q\cdot (q^{r-1}-(q-1)^{r-1})+q^{(k-1)r}\cdot (q-1)^{r-1}\\
=&((q-1)\cdot q^{k-2}\cdot(q^{r-1}-(q-1)^{r-1})^{k-1}+\\
        &+q^{r(k-1)-1})\cdot q\cdot (q^{r-1}-(q-1)^{r-1})+q^{(k-1)r}\cdot(q-1)^{r-1}\\
        =&(q-1)\cdot q^{k-1}\cdot(q^{r-1}-(q-1)^{r-1})^{k}+\\
        &+q^{r(k-1)}\cdot(q^{r-1}-(q-1)^{r-1})+q^{r(k-1)}\cdot(q-1)^{r-1}\\
        =&(q-1)\cdot q^{k-1}\cdot(q^{r-1}-(q-1)^{r-1})^{k}+q^{rk-1}
    \end{array}
    \]
    \end{proof}

\begin{lemma}\label{Prop:AB}
If $r\geq
2+\log_{\left(1-\frac{1}{q}\right)}\left(\frac{q-\sqrt{q}}{q-1}\right)$,
then
$$\#V(f_{r,2})^*\geq 2q^{2r-1}-q^{2r-2}-1.$$
\end{lemma}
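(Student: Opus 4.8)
The plan is to reduce the claimed inequality to a single compact statement about $(1-1/q)^{r-1}$ and then read that statement off from the hypothesis. First I would invoke Lemma \ref{Prop:number} with $k=2$, which gives $\#V(f_{r,2})=(q-1)q\,(q^{r-1}-(q-1)^{r-1})^2+q^{2r-1}$; since $0\in V(f_{r,2})$ we have $\#V(f_{r,2})^*=\#V(f_{r,2})-1$. Substituting this into the target inequality $\#V(f_{r,2})^*\geq 2q^{2r-1}-q^{2r-2}-1$ and subtracting the term $q^{2r-1}-1$ that occurs on both sides, the claim becomes $(q-1)q\,(q^{r-1}-(q-1)^{r-1})^2\geq q^{2r-2}(q-1)$. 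Dividing out the positive factors $q-1$ and $q$ reduces this to $(q^{r-1}-(q-1)^{r-1})^2\geq q^{2r-3}$.

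Next I would take square roots. Because $q>q-1>0$, the base $q^{r-1}-(q-1)^{r-1}$ is positive, so the previous inequality is equivalent to $q^{r-1}-(q-1)^{r-1}\geq q^{r-3/2}$. Writing $q^{r-3/2}=q^{r-1}q^{-1/2}$ and dividing through by $q^{r-1}$ turns this into the equivalent form $(1-1/q)^{r-1}\leq 1-1/\sqrt{q}$. Thus the entire statement is reduced to verifying this one inequality.

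Finally I would feed in the hypothesis. Since $q\geq 2$, the base satisfies $0<1-1/q<1$, so the map $t\mapsto(1-1/q)^t$ is strictly decreasing and applying it reverses inequalities. Applying it to the hypothesis $r-2\geq\log_{(1-1/q)}\!\big(\tfrac{q-\sqrt{q}}{q-1}\big)$ yields $(1-1/q)^{r-2}\leq\frac{q-\sqrt{q}}{q-1}$. Multiplying both sides by the positive quantity $1-1/q=\frac{q-1}{q}$ gives $(1-1/q)^{r-1}\leq\frac{q-1}{q}\cdot\frac{q-\sqrt{q}}{q-1}=\frac{q-\sqrt{q}}{q}=1-\frac{1}{\sqrt{q}}$, which is exactly the reduced inequality, completing the proof.

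The computation is largely routine, so the only points needing care are the two direction-sensitive manipulations: the square-root step, where one must note the positivity of $q^{r-1}-(q-1)^{r-1}$, and the exponentiation step, where the base $1-1/q$ being smaller than $1$ forces the inequality to flip. The single load-bearing observation is that the shift by $2$ together with the argument $\frac{q-\sqrt{q}}{q-1}$ in the hypothesis is engineered precisely so that the extra factor $\frac{q-1}{q}$ collapses the bound to $1-1/\sqrt{q}$, via the identity $\frac{q-1}{q}\cdot\frac{q-\sqrt{q}}{q-1}=\frac{q-\sqrt{q}}{q}=1-1/\sqrt{q}$; I would flag this as the crux and verify it explicitly.
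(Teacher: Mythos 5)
Your proof is correct. The opening move is the same as the paper's: substitute Lemma \ref{Prop:number} with $k=2$ and cancel $q^{2r-1}-1$ to reduce the claim to $(q-1)q\,(q^{r-1}-(q-1)^{r-1})^2\geq q^{2r-2}(q-1)$. From there the two arguments diverge in the middle algebra. The paper expands the square, rewrites the resulting expression as $q^{2r}\left(1-\frac{1}{q}\right)^2\left(\left(1-\frac{1}{q}\right)t^2-2t+1\right)$ with $t=\left(1-\frac{1}{q}\right)^{r-2}$, and solves the quadratic, obtaining the two branches $t\leq\frac{q-\sqrt{q}}{q-1}$ or $t\geq\frac{q+\sqrt{q}}{q-1}$; the second branch is vacuous (it would force $(1-1/q)^{r-2}>1$) and is silently discarded, leaving the stated threshold. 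You instead keep the perfect square intact and take square roots directly --- justified, as you note, by the positivity of $q^{r-1}-(q-1)^{r-1}$ --- reducing everything to the single linear-in-$t$ condition $\left(1-\frac{1}{q}\right)^{r-1}\leq 1-\frac{1}{\sqrt{q}}$, which the hypothesis delivers via monotonicity of $t\mapsto(1-1/q)^t$ and the identity $\frac{q-1}{q}\cdot\frac{q-\sqrt{q}}{q-1}=1-\frac{1}{\sqrt{q}}$. Your route is slightly cleaner: it avoids the quadratic formula entirely, never produces the extraneous root the paper must (implicitly) rule out, and since every reduction you perform is an equivalence, it additionally makes transparent that the threshold in the hypothesis is sharp; the paper's quadratic formulation, for its part, displays both roots explicitly, which is where the expression $\frac{q-\sqrt{q}}{q-1}$ in the statement comes from in the first place.
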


\begin{proof}
By Lemma \ref{Prop:number},
        $$\#V(f_{r,2})^*=(q-1)\cdot q\cdot (q^{r-1}-(q-1)^{r-1})^2+q^{2r-1}-1.$$
    We want to determine when
    $$\#V(f_{r,2})^*\geq 2q^{2r-1}-q^{2r-2}-1.$$

    This is equivalent to solve
    $$q^{2r-2}(q-1)^2+q(q-1)^{2r-1}-2q^r(q-1)^r\ge 0$$
 and with some manipulations we obtain
    $$q^{2r}\cdot\left(1-\frac{1}{q}\right)^2\cdot\left(\left(1-\frac{1}{q}\right)\cdot\left(\left(1-\frac{1}{q}\right)^{r-2}\right)^2-2\cdot\left(1-\frac{1}{q}\right)^{r-2}+1\right)\ge 0$$
    The first two factors are positive. To study the third one, we can call $t=\left(1-\frac{1}{q}\right)^{r-2}$, and solve
    $$\left(1-\frac{1}{q}\right)\cdot t^2-2\cdot t+1\geq 0$$
    which gives
    $$t\leq \frac{q-\sqrt{q}}{q-1} \ \text{ or } \ t\geq \frac{q+\sqrt{q}}{q-1}$$
    that is,
    $$r\geq 2+\log_{\left(1-\frac{1}{q}\right)}\left(\frac{q-\sqrt{q}}{q-1}\right)$$
\end{proof}

\begin{theorem}
    There are infinitely many values of $r$ such that $\mathcal{C}_{f_{r,2}}$ is minimal and does not satisfy the AB condition.
\end{theorem}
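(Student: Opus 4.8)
The plan is to combine the three ingredients already established for the family $f_{r,2}$: minimality, a lower bound on the number of zeros, and the failure of the AB condition that such a bound forces. Since $\mathcal{C}_{f_{r,2}}$ is minimal for every $r\geq 2$ by the Corollary above (applying Theorem \ref{Prop:properties} together with Theorem \ref{Th:Main} in the case $k=2$), the only thing left to produce is an infinite supply of values of $r$ for which \eqref{AB} fails.

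First I would record the numerical coincidence that makes everything fit. Here $n=rk=2r$, so the quantity $2q^{2r-1}-q^{2r-2}-1$ appearing in Lemma \ref{Prop:AB} is exactly the threshold $2q^{n-1}-q^{n-2}-1$ of Lemma \ref{lemma:AB}. Consequently, as soon as $\#V(f_{r,2})^*\geq 2q^{2r-1}-q^{2r-2}-1$, Lemma \ref{lemma:AB} yields
\[
\frac{w_{max}}{w_{min}}\geq \frac{q}{q-1},
\]
so that the (strict) AB condition \eqref{AB} is violated.

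Next I would invoke Lemma \ref{Prop:AB}, which guarantees precisely this inequality for every $r$ with
\[
r\geq 2+\log_{\left(1-\frac{1}{q}\right)}\left(\frac{q-\sqrt{q}}{q-1}\right).
\]
The key point — and really the only substantive one — is that the right-hand side is a fixed real constant depending on $q$ alone. Indeed, for $q\geq 2$ the base $1-\frac1q$ lies in $(0,1)$, and since $\sqrt q>1$ one checks that $\frac{q-\sqrt q}{q-1}$ also lies in $(0,1)$; hence the logarithm is a well-defined finite (in fact positive) real number.

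Finally, because this threshold is finite, the set of integers $r$ exceeding it is infinite, and every such $r$ automatically satisfies $r\geq 2$. For each of these $r$ the code $\mathcal{C}_{f_{r,2}}$ is simultaneously minimal (by the Corollary) and fails the AB condition (by Lemmas \ref{Prop:AB} and \ref{lemma:AB}), which is the assertion of the theorem. I do not anticipate a genuine obstacle: the statement is essentially a packaging of the preceding lemmas, and the only step requiring care is confirming that the argument of the logarithm is a positive real lying strictly below $1$, so that the threshold is a bona fide finite constant rather than an empty or ill-defined bound.
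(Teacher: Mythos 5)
Your proposal is correct and follows essentially the same route as the paper: the paper's proof likewise observes that for $r\geq 2+\log_{\left(1-\frac{1}{q}\right)}\left(\frac{q-\sqrt{q}}{q-1}\right)$ the code $\mathcal{C}_{f_{r,2}}$ is minimal by Theorem \ref{Prop:properties} and Theorem \ref{Th:Main}, and fails the AB condition by combining Lemma \ref{Prop:AB} with Lemma \ref{lemma:AB} (using $n=2r$). Your additional verification that the logarithmic threshold is a well-defined finite constant, so that infinitely many integers $r$ exceed it, is a detail the paper leaves implicit but is entirely consistent with its argument.
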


\begin{proof}
    If $r\geq 2+\log_{\left(1-\frac{1}{q}\right)}\left(\frac{q-\sqrt{q}}{q-1}\right)$, then $C_{f_{r,2}}$ is minimal, by Lemma \ref{Prop:properties} and Theorem \ref{Th:Main}, and does not satisfy the AB condition, by Lemma \ref{Prop:AB} and Lemma \ref{lemma:AB}.
\end{proof}

\begin{remark}
Note that the function defined above is homogeneous, so that we can
consider it in the projective case. Completely analogously to the
affine case, we have that $\widetilde{\mathcal{C}}_{f_{r,k}}$ is
minimal for $r\geq 2$ and $k\geq 2$, and there are infinitely many
values of $r$ such that $\widetilde{\mathcal{C}}_{f_{r,2}}$ does not
satisfy the AB condition.
\end{remark}

\bigskip
\section{Conclusion and open problems}\label{sec:conclusion}
In this paper we have linked a well-known construction of minimal
linear codes to the geometry of its defining function, which was a
quite unexplored topic and which may lead to new and interesting
constructions in coding theory. A new notion in blocking sets'
theory is introduced and related examples are provided. Moreover, we
present an infinite family of minimal linear codes not satisfying
the Ashikhmin-Barg's condition.

To conclude, we list here some of the possible developments of our
results.

\begin{problem}
Give a necessary and sufficient conditions for $C_f$ to be minimal,
which relies only on the geometry of $V(f)$.
\end{problem}

\begin{problem}
Investigate the weight distribution of $C_f$ for some cutting
blocking sets.
\end{problem}

\begin{problem}
In \S\ref{sec:functions} we introduced the family of functions
$f_{r,k}$ and we proved that infinitely many codes are minimal when
$k=2$. From computations, it seems that the same holds for $k>2$ but
the inequalities become much harder to be solved.
\end{problem}

We think that our geometrical approach gives a deeper insight into
the problem and may lead to other future developments and
constructions.
\bigskip
\section*{Acknowledgments}
The research of M. Bonini was supported by the Italian National
Group for Algebraic and Geometric Structures and their Applications
(GNSAGA - INdAM). The authors gratefully thank M. Giulietti for the
fruitful discussions about the paper.

\bigskip



\begin{thebibliography}{00}
        \bibitem{AB1998}
        \newblock A. Ashikhmin and A. Barg.
        \newblock Minimal vectors in linear codes.
        \newblock \emph{IEEE Trans. Inf. Theory}, 44(5):  2010--2017,  1998.

        \bibitem{BMeT1978}
        \newblock E. R Berlekamp, R. J. McEliece, and H. C. A. van Tilborg,
        \newblock On the Inherent Intractability of Certain Coding Problems.
        \newblock \emph{IEEE Trans. Inform. Theory}, 24(3): 384--386, 1978.

        \bibitem{BB2019}
        \newblock D. Bartoli and M. Bonini,
        \newblock Minimal linear codes in odd characteristic.
        \newblock \emph{IEEE Trans. Inf. Theory}, 65(7): 4152--4155, 2019.

        \bibitem{BN1990}
        \newblock J. Bruck and M. Naor,
        \newblock The Hardness of Decoding Linear Codes with Preprocessing.
        \newblock \emph{IEEE Trans. Inf. Theory}, 36(2): 381--385, 1990.

        \bibitem{CCP2013}
        \newblock H. Chabanne, G. Cohen and A. Patey.
        \newblock Towards secure two-party computation from the wiretap channel.
        \newblock \emph{Information Security and Cryptology--ICISC 2013}, Springer, 34--46, 2013.

        \bibitem{CH2017}
        \newblock S. Chang and J.Y. Hyun.
        \newblock Linear codes from simplicial complexes.
        \newblock \emph{Des. Codes Cryptogr.}, 86(10): 2167--2181, 2017.

        \bibitem{Clark}
        \newblock P.L. Clark
        \newblock The Chevalley-Warning Theorem (featuring the Erdos-Ginzburg-Ziv Theorem).
        \newblock \url{http://math.uga.edu/~pete/4400ChevalleyWarning.pdf.}

        \bibitem{DD}
        \newblock K. Ding and C. Ding.
        \newblock A class of two-weight and three-weight codes and their applications in secret sharing.
        \newblock \emph{IEEE Trans. Inf. Theory}, 61(11): 5835--5842, 2015.

        \bibitem{DHZ2018}
        \newblock C. Ding, Z. Heng and Z. Zhou.
        \newblock Minimal binary linear codes.
        \newblock \emph{IEEE Trans. Inf. Theory}, 64(10): 6536--6545, 2018.

        \bibitem{DY2003}
        \newblock C. Ding and J. Yuan.
        \newblock Covering and secret sharing with linear codes.
        \newblock \emph{DMTCS, Heidelberg, Germany}, Springer, 11--25, 2003.

        \bibitem{HDZ2018}
        \newblock Z. Heng, C. Ding and Z. Zhou.
        \newblock Minimal Linear Codes over Finite Fields.
        \newblock \emph{Finite Fields Appl.}, 54: 176-196, 2018.

        \bibitem{HP}
        \newblock W.C. Huffman and V. Pless.
        \newblock \emph{Fundamentals of error-correcting codes}. \newblock Cambridge university press, 2010.

        \bibitem{HirschBook}
        \newblock J. W. P. Hirschfeld.
        \newblock \emph{Projective geometries over finite fields.}
        \newblock Oxford Univ. Press, Oxford, England, 1980.

        \bibitem{M93}
        \newblock J. L. Massey.
        \newblock Minimal codewords and secret sharing.
        \newblock \emph{Proc. 6th Joint Swedish-Russian Int. Workshop on Info. Theory}, M\"olle, Sweden: 276--279, 1993.

        \bibitem{M1995}
        \newblock J. L. Massey.
        \newblock Some applications of coding theory in cryptography.
        \newblock in \emph{Codes and Cyphers: Cryptography and Coding IV}, Esses, England, pp. 33--47, 1995.

        \bibitem{MOS}
        \newblock S. Mesnager, F. \"Ozbudak and A. S\i nak.
        \newblock Linear codes from weakly regular plateaued functions and their secret sharing schemes.
        \newblock \emph{Des. Codes and Cryptogr.}, 87(2-3), 463--480, 2019.

        \bibitem{SdB}
        \newblock L. Storme, J. De Beule Ed.
        \newblock \emph{Current research topics in Galois Geometry}.
        \newblock Nova publishers incorporated, New york, 2014.

        \bibitem{ZYW}
        \newblock W. Zhang, H. Yan and H. Wei.
        \newblock Four families of minimal binary linear codes with $w_ {\min}/w_ {\max}\le 1/2$.
        \newblock \emph{App. Alg in Eng., Comm. and Comp.}, 30(2): 175--184, 2019.
    \end{thebibliography}
\end{document}